\documentclass[submission,copyright,noderivs]{eptcs}
\providecommand{\event}{NCMA 2026} 

\usepackage{mathrsfs}
\usepackage{amssymb}
\usepackage{amsmath}
\usepackage{amsthm}
\usepackage{textcomp}
\usepackage{pifont}
\usepackage[usenames,dvipsnames]{xcolor}
\usepackage{colortbl}
\usepackage{transparent}

\theoremstyle{plain}
\newtheorem{theorem}{Theorem}
\newtheorem{proposition}[theorem]{Proposition}
\newtheorem{lemma}[theorem]{Lemma}
\newtheorem{corollary}[theorem]{Corollary}

\theoremstyle{definition}

\newcommand{\lfam}{\mathscr{L}}

\newcommand{\border}{\texttt{\#}}

\newcommand{\rightend}{\mathord{\vartriangleleft}}
\newcommand{\blank}{\raisebox{\depth}{\texttt{\char 32}}}

\newcommand{\accept}{\texttt{accept}}

\newcommand{\dfa}{\textsf{DFA}}
\newcommand{\dfawtl}{\textsf{DFAwtl}}
\newcommand{\nrdfawtl}{\textsf{nrDFAwtl}}
\newcommand{\nfa}{\textsf{NFA}}
\newcommand{\nfawtl}{\textsf{NFAwtl}}
\newcommand{\nrnfawtl}{\textsf{nrNFAwtl}}

\newcommand{\ntrans}{\textsf{NIUFST}}

\newcommand{\subtext}[1]{\textnormal{\scriptsize #1}}
\newcommand{\tcyes}{\textcolor{PineGreen}{\transparent{0.4}\ding{51}}}%
\newcommand{\tcno}{\textcolor{red}{\transparent{0.4}\ding{55}}}%
\newcommand{\cno}{\textcolor{red}{\ding{55}}}%
\newcommand{\copen}{\textcolor{blue}{\textbf{?}}}
\title{On some Open Problems for Finite Automata\\ with Translucent Input Letters}

\author{Martin Kutrib \and Andreas Malcher \and Matthias Wendlandt
\institute{%
  Institut f{\"u}r Informatik, Universit{\"a}t Giessen\\
  Arndtstr.~2, 35392 Giessen, Germany}
\email{$\{$kutrib, andreas.malcher, matthias.wendlandt$\}$@uni-giessen.de}
}
 
\def\titlerunning{On some Open Problems for Finite Automata with Translucent Letters}
\def\authorrunning{M.~Kutrib, A.~Malcher, M.~Wendlandt}

\begin{document}

\maketitle

\begin{abstract}
Finite automata with translucent input letters are a recent model of discontinuous input processing. Basically,
classical finite automata are equipped with a translucency function that defines, depending on the state, the
set of translucent input symbols. While processing the input, translucent symbols are skipped and only visible symbols
are read and processed. It is distinguished between deterministic and nondeterministic models and, in addition,
between returning and non-returning models. In the former case,
the automaton restarts from the left end of the input after having consumed some visible symbol, whereas in the
latter case the automaton restarts from the left end of the input when the right endmarker symbol is seen.

Returning finite automata with translucent letters have been introduced by Nagy and Otto and its non-returning
variant has been introduced by Mr{\'{a}}z and Otto. Many results concerning the computational capacity, relations
between deterministic and nondeterministic models, and relations between returning and non-returning models are known.
Moreover, some results on closure properties and decidability questions have been obtained as well. However, some
questions have still been open since many years. In this paper, we will give answers to some of these open questions.

In particular, we show the non-closure under concatenation, Kleene star, reversal, and inverse homomorphism for the 
non-returning deterministic as well as nondeterministic model. We also obtain non-closure under inverse homomorphism
for the returning deterministic and nondeterministic model. Finally, we investigate the emptiness problem for non-returning finite
automata with translucent input letters and show the decidability of the problem in case of deterministic as well as
nondeterministic automata.
\end{abstract}

\section{Introduction}

An automaton model typically processes its input in a continuous way, namely from left to right in models with
one-way motion of the input head.
In addition, one symbol is read at one time step and the input is accepted or rejected when the end of the input is reached.
This ``standard'' input mode has been extended in the literature in several directions. For example, 
two-way motion of the input head, stationary moves of the input head, rotating the input head, or restarting modes 
have been considered for a wide variety of machines (the reader is referred to, e.g., 
\cite{bensch:2009:irdnfa,Hopcroft:1979:itatlc:book,Jancar:1995:ra:proc,otto:2025:book} for an overview of these and other models).
In all these extensions studied, consequences on the computational and descriptional power of the machines could be observed.
Thus, the way of processing the input can be considered as a computational resource that then can be used to tune 
computational and descriptional power.

In recent literature several variants of \emph{discontinuous} input processing have been introduced and investigated.
One of these variants is the ``jumping'' paradigm which means that jumping to any position inside the input string is allowed at any move. 
This paradigm has been introduced for finite automata in~\cite{meduna:2012:jfa}, where it is shown that this discontinuous way of
input processing may increase the computational power since it allows to accept some non-context-free languages,
but also may decrease the computational power since it is known that some regular languages are not accepted by jumping finite automata~\cite{meduna:2012:jfa}.
``Right one-way jumping'' automata which are a restricted variant of jumping finite automata are investigated in depth in~\cite{beier:2022:nrowjfa,chigahara:2016:owjfa}.
Another discontinuous way of input processing is to make some parts of the input, depending on the state, \emph{invisible}. 
This concept of \emph{translucent letters} has been introduced by Nagy and Otto in~\cite{nagy:2011:fsawtl} for deterministic and nondeterministic finite automata
and the basic idea for such devices is to provide a translucency function that defines in which states which letters of the input are translucent. 
At each move, the device skips (by looking through) the translucent portion of the input, from the
current input head position up to the first non-translucent letter (thus realizing a jump).
After processing the non-translucent symbol, in the \emph{returning mode} the
input head returns to the left end of the input while, in the \emph{non-returning mode}, the input head continues to process the input 
according to its updated current state and the corresponding translucent symbols. In both modes, the input head returns to the left end when the right
end of the input is reached.  

Deterministic and nondeterministic finite automata with translucent input letters have deeply been investigated in the literature 
(see, e.g.,~\cite{mraz:2023:nrfawtl,nagy:2012:oncdsosdrawwso,otto:2023:asoawtl:proc}). However, many questions are still open,
but some of them will be answered in the present paper. Automata with translucent letters are an active area of research
and many extensions and variants of the original model are currently
investigated. Recently studied extensions are finite
automata with translucent words~\cite{Nagy:2024:fawsotw} and finite automata with translucent letters and two-way motion of
the input head~\cite{kutrib:2025:twfawtil:proc}. Pushdown automata with translucent input letters in the returning mode are originally studied
in~\cite{nagy:2011:cdsosdragbaeps,Nagy:2013:dpcdsosdra} in terms of certain cooperating distributed systems of restarting automata with 
additional pushdown store. However, in their model $\lambda$-transitions are not allowed and acceptance is defined by empty pushdown.
General pushdown automata with translucent input letters working in the returning mode as well as in the non-returning mode
are introduced and studied in~\cite{kutrib:2026:dpawtil,kutrib:2026:opolabdpawtil}.
In~\cite{Otto:2015:ovptl} the paradigm of \emph{input-driven} languages was imposed to certain cooperating distributed systems of restarting automata 
with additional pushdown store with regard to characterizing certain trace languages. A general definition of 
input-driven pushdown automata with translucent input letters and many results concerning their computational capacity, closure
properties, decidability questions has been given in~\cite{kutrib:2025:idpawtil}. The \emph{jump complexity} of finite automata
with translucent input letters is studied in~\cite{jcodfawtl:2024:proc,mitrana:2024:jcofawtl}, 
whereas subsequence matching problems of such automata are investigated in~\cite{smaapfawtl:2025:proc}.

In this paper, we are going to investigate some problems on returning and non-returning finite automata with
translucent input letters that have been left open in~\cite{mraz:2023:nrfawtl,nagy:2012:oncdsosdrawwso}. These open problems
concern closure properties as well as decidability questions and will be answered in the present paper
which is organized as follows. After giving the necessary definitions in the next section, 
we study in Section~\ref{sec:open-closure} several open problems on closure properties. 
For non-returning deterministic and nondeterministic finite automata with translucent input letters we can prove
that the corresponding language families are not closed under concatenation, Kleene star, inverse homomorphism,
and reversal. These questions have been left open in~\cite{mraz:2023:nrfawtl}.
We further obtain the non-closure under complementation for non-returning nondeterministic finite automata with translucent input letters
and the non-closure under inverse homomorphism for returning deterministic and nondeterministic finite automata with translucent input letters.
The latter question has been left open in~\cite{nagy:2012:oncdsosdrawwso}. In Section~\ref{sec:open-decide},
we look at the emptiness problem for deterministic and nondeterministic finite automata with translucent input letters.
It is known that this problem is decidable for the returning variants, whereas its decidability status for the non-returning case
is left open in~\cite{mraz:2023:nrfawtl}. Here, we can show that the emptiness problem is also decidable in the non-returning case
for deterministic as well as nondeterministic machines.

\section{Preliminaries}\label{sec:prelim}

We denote by $\Sigma^*$ the set of all words on the finite alphabet $\Sigma$, including the 
empty word $\lambda$, and let
$\Sigma^+ = \Sigma^* \setminus \{\lambda\}$. For any word $w\in\Sigma^*$, we let 
$|w|$ denote its length, $w^R$ its reversal, and $|w|_a$ the number of occurrences of the 
symbol $a\in\Sigma$ in $w$. 
We use $\subseteq$ for {inclusions}, and $\subset$ for proper inclusion. 
Given a set~$S$, we denote by $2^S$ its power set, and by $|S|$ its cardinality. 
A language on  $\Sigma$ is any subset $L \subseteq \Sigma^*$. The complement of $L$
is the language $\overline{L}=\Sigma^* \setminus L$, its reversal is $L^R=\{\,w^R \mid w\in L\,\}$.
Two language families~$\mathscr{L}_1$ and~$\mathscr{L}_2$
are said to be {incomparable} whenever~$\mathscr{L}_1$ is not a subset of~$\mathscr{L}_2$
and vice versa.

Finite automata with translucent letters are extensions of classical
finite automata that do not have to read their inputs from left to right.
Instead, depending on the current state of such devices, some of the input
letters may be translucent (invisible). Accordingly, a finite automaton
with translucent letters reads and processes the first visible input letter
from left. The following non-returning finite automata with translucent
letters have been introduced in~\cite{mraz:2023:nrfawtl}.

A \emph{non-returning nondeterministic finite automaton with translucent letters} (\nrnfawtl) is defined as a system
$M=\langle Q,\Sigma, q_0, \rightend, \tau,\delta\rangle$, 
where
$Q$ is the finite set of {states}, $\Sigma$ is the finite set of {input symbols}, with
$\Sigma \cap Q=\emptyset$,
$q_0 \in Q$ is the {initial state},
$\rightend\notin \Sigma$ is the {endmarker},
$\tau\colon Q\to 2^\Sigma$ is the {translucency mapping}, and
$$
\delta\colon Q\times (\Sigma\cup\{\rightend\}) \to 
2^Q \cup\{\accept\}
$$ 
is the {partial transition function}.
The translucency mapping $\tau$ bear the following meaning:
for any state $q\in Q$, the
letters from the set $\tau(q)$ are \emph{translucent (invisible) for $q$}, that is, whenever in $q$, the automaton~$M$ does
not see these letters (or equivalently, $M$ can see through such letters).

A {configuration} of $M$ is represented by $(uqv\rightend)$ or $\accept$, 
where $q \in Q$ is {the current state}, $uv\in \Sigma^*$ is the remaining tape inscription
with $u$ being to the left and $v$ to the right of the input head.
The {initial configuration} of $M$ on input $w\in\Sigma^*$ is $(q_0w\rightend)$.
Along its computation, $M$ runs through a sequence of configurations. Being in
some configuration $(uqv\rightend)$, a step of $M$ takes place
as follows. First, $M$ determines the next input symbol to be processed
by taking the first letter to the right of the input head that is visible in
state $q$. Precisely, if~$v=xay$ with $x\in \tau(q)^*$ and $a\notin \tau(q)$, then $M$ takes~$a$. Now, 
$M$ halts and rejects whenever $\delta(q,a)$ is undefined, otherwise it chooses
some state $p\in \delta(q,a)$ and the successor configuration is
$(uxpy\rightend)$.
In the case $v\in \tau(q)^*$, $M$ sees the endmarker~$\rightend$,
and it halts and accepts if and only if $\delta(q,\rightend)=\accept$. 
One step from a configuration to its {successor configuration} is denoted
by~$\vdash$, which is specified as follows.
Let
$p,q\in Q$, $a \in \Sigma$, and $x,y,u,v \in \Sigma^*$. Then:
\begin{enumerate}
\itemsep 1mm
\item 
  $(uqxay\rightend) \vdash (uxpy\rightend)$,
  if $x\in \tau(q)^*$, $a\notin \tau(q)$, and $p\in \delta(q,a)$,
\item 
 $(uqv\rightend) \vdash (puv\rightend)$,
  if $v\in \tau(q)^*$ and $p\in \delta(q,\rightend)$,
\item 
 $(uqv\rightend) \vdash \accept$,
  if $v\in \tau(q)^*$ and $\accept = \delta(q,\rightend)$.
\end{enumerate}

We let $\vdash^*$ (resp.,~$\vdash^+$) denote the reflexive and transitive
(resp., transitive) closure of~$\vdash$.
Sometimes, we will be saying that an $\nrnfawtl$ performs \emph{sweeps}, where a sweep 
is a sequence of transitions that starts with the input head at the left end of the
(remaining) input and ends after the next (if any) return move on the
endmarker (move of type (2) above).

The language accepted by the $\nrnfawtl$ $M$ is the set $L(M)$ of those words in~$\Sigma^*$ for which
the computation, beginning in the initial
configuration, eventually halts accepting, namely:
$$
L(M) = \{\,w \in \Sigma^* \mid (q_0w\rightend)\vdash^+ \accept \,\}.
$$
In general, we denote by $\lfam({\sf X})$ the family of all languages accepted by some 
device $\sf{X}$.

The notion of translucent input letters in the realm of finite automata has
first been introduced in~\cite{nagy:2011:fsawtl}, where the finite automata
work in the so-called returning mode.
According to this paradigm,
after processing a visible letter, the head of the automaton always returns to
the beginning of the input (abbreviated as $\nfawtl$). Basically, definitions and notations
for $\nrnfawtl$ apply to $\nfawtl$ as well. The successor configurations are
defined correspondingly to implement the mode.

Deterministic (non-returning) finite automata with translucent letters
($\nrdfawtl$, $\dfawtl$) are special cases of $\nrnfawtl$ and $\nfawtl$, where
$|\delta(q,a)|\leq 1$ for all $q\in Q$, $a\in\Sigma\cup\{\rightend\}$. In
particular, this definition implies that $\dfawtl$ may continue a
computation by another sweep when the endmarker is seen.
Originally, $\dfawtl$ have been introduced~\cite{nagy:2011:fsawtl} and 
studied~\cite{nagy:2013:gdcdssdr} as possible
interpretation of cooperating distributed systems of certain restarting automata.
Because of this interpretation, $\dfawtl$ and $\nfawtl$ have to halt when the
endmarker is seen. While this condition does not affect the computational
capacity of $\nfawtl$, it critically limits the computational power 
of $\dfawtl$~\cite{mraz:2024:rfawtl:proc}. To move beyond the limitations of
that particular interpretation, several models in the literature are 
studied with translucent input symbols as natural extensions of the classical
models, that may continue the computation when the endmarker is seen.
Note that such automata are sometimes called repetitive in the literature.  
However, here we will continue to use the notion $\dfawtl$ for the general
model without calling it repetitive.

\section{Open Closure Properties}\label{sec:open-closure}

In this section, we turn to solve some open closure properties of the language
families defined by returning and non-returning variants of deterministic and
nondeterministic finite automata with translucent input letters. We start with
an observation that gives the non-closure under complementation for non-returning
nondeterministic finite automata with translucent input letters.

The language $L_\subtext{eq} = \{\, w\in \{a,b\}^+ \mid |w|_a=|w|_b\,\}$
is accepted by some $\dfawtl$~\cite{Nagy:2013:dpcdsosdra}.
So, it is accepted by some $\nrdfawtl$ and some $\nrnfawtl$.
Now, we consider the regular language
$R=\{aa,bb\}^*$ and build the intersection \mbox{$L_\subtext{dab}=
  L_\subtext{eq}\cap R$.} 
The language $L_\subtext{dab}$ has been introduced
in~\cite{kutrib:2025:twfawtil:proc}. In~\cite{kutrib:2026:twfawtl} it is shown
$L_\subtext{dab}$ is not accepted by any $\nrnfawtl$. So, neither the family
$\lfam(\nrnfawtl)$ nor the family $\lfam(\nrdfawtl)$ is closed under
intersection with regular languages. Since both are containing strictly the
family of regular languages, it is a trivial implication that
then both families are not closed under intersection. Since
$\lfam(\nrnfawtl)$ is closed under union~\cite{mraz:2023:nrfawtl},
it cannot be closed under complementation by De Morgan's laws.

\begin{corollary}\label{cor:nrnfawtl-not-closed-compl}
The family $\lfam(\nrnfawtl)$ is not closed under complementation.
\end{corollary}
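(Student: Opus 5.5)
The plan is a proof by contradiction via De Morgan's laws, using the facts collected just before the statement. Suppose $\lfam(\nrnfawtl)$ were closed under complementation. Since it is closed under union~\cite{mraz:2023:nrfawtl}, for any $L_1,L_2\in\lfam(\nrnfawtl)$ we would get
$$
L_1\cap L_2=\overline{\overline{L_1}\cup\overline{L_2}}\in\lfam(\nrnfawtl),
$$
so the family would be closed under intersection. We then exhibit two members of the family whose intersection lies outside it.

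For the two members, take $L_1=L_\subtext{eq}$ and $L_2=R=\{aa,bb\}^*$. For $L_\subtext{eq}$, a $\dfawtl$ accepting it exists~\cite{Nagy:2013:dpcdsosdra}. A $\dfawtl$ can be simulated by an $\nrdfawtl$: in each state reached after reading a visible letter, the non-returning automaton moves to a copy of the state in which every letter is translucent, and at the endmarker it returns to the left end and resumes in the original state. An $\nrdfawtl$ is in particular an $\nrnfawtl$, so $L_\subtext{eq}\in\lfam(\nrnfawtl)$. For $R$, note that $R$ is regular. An ordinary \dfa\ is simulated by an $\nrnfawtl$ with empty translucency everywhere, accepting on $\rightend$ exactly in final states. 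Hence $R\in\lfam(\nrnfawtl)$.

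The intersection $L_1\cap L_2$ is exactly $L_\subtext{dab}$, which is not accepted by any $\nrnfawtl$ by~\cite{kutrib:2026:twfawtl}. This contradicts closure under intersection, and therefore closure under complementation fails.

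The only real difficulty is that non-acceptance result for $L_\subtext{dab}$. We cite it rather than reprove it; everything else is formal. The one point to check carefully is the returning-to-non-returning simulation used for $L_\subtext{eq}$. If that step were troublesome, an alternative is to take an $\nrdfawtl$ for $L_\subtext{eq}$ directly, letting it repeatedly delete one $a$ and one $b$ per sweep.
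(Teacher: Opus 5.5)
Your proof is correct and follows the paper's approach: $L_\subtext{dab}=L_\subtext{eq}\cap\{aa,bb\}^*$ is not accepted by any $\nrnfawtl$ \cite{kutrib:2026:twfawtl}, so the family is not closed under intersection, and closure under union together with De Morgan's laws then rules out closure under complementation. Your explicit $\dfawtl$-to-$\nrdfawtl$ simulation and your embedding of regular languages only spell out two facts that the paper uses without comment.
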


\subsection{Non-Returning Computations and Concatenation, Kleene Star and Reversal}\label{subsec:concat-star}

The question of whether the families $\lfam(\nrdfawtl)$ and
$\lfam(\nrnfawtl)$ are closed under concatenation and Kleene star are presented as
open problems in~\cite{mraz:2023:nrfawtl}. In addition, their closure under
reversal has been left open. To solve these problems, we consider the language
$L_c = L_{c,1} \cdot L_{c,2}$, where
$$
L_{c,1}=\{\, a^n\border^n a^n\mid n\geq 0\,\}
\text{ and } L_{c,2}=\{\, \border v\mid v \in \{a,b\}^+, |v|_a=|v|_b\,\}.
$$

\begin{lemma}\label{lem:lc-not-nrnfawtl}
The language $L_c$ is not accepted by any $\nrnfawtl$.
\end{lemma}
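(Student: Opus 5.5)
We argue by contradiction. Suppose an $\nrnfawtl$ $M$ with state set $Q$ accepts $L_c$. We pick an input $w = a^n\border^n a^n \border b^m a^m$ with $n,m$ large compared to $|Q|$, fix an accepting computation $C$ of $M$ on $w$, and modify $w$ slightly so that $C$ (or a computation assembled from pieces of $C$) still accepts a word outside $L_c$. The guiding idea is that the translucency mapping hides letters \emph{by type only}. So $M$ can tell an $a$ of the third block from an $a$ of the suffix $v=b^ma^m$ only through the separating $\border$ and through the relative order in which letters are consumed.

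\textbf{Step 1: describe computations by the order of consumption.} We first write down a normal form for accepting computations on words of the shape $a^i\border^j a^k \border v$. Each move consumes the leftmost remaining letter that is visible in the current state. Hence a computation is determined by its state sequence together with, for each move, the type of letter consumed, and the consumed position is then forced. We classify every consumed $a$ by whether $\border$ was translucent in the state that consumed it. If $\border$ is visible and some $\border$ remains to the left of all remaining $a$'s of the third block, then that $a$ must come from the first block or be blocked. If $\border$ is translucent, the consumed $a$ is simply the leftmost remaining $a$ of the whole word.

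\textbf{Step 2: the equalities cannot be checked reliably.} Membership in $L_c$ requires three length comparisons: first block $=$ $\border$-block, $\border$-block $=$ third block, and $|v|_a=|v|_b$. While an $a$ is consumed with $\border$ translucent, the $a$'s of the first block, the third block and $v$ form one ``queue''. The only way $M$ can locate the boundary between the third block and $v$ is to observe the last $\border$. We choose $m \gg n$ and use the pigeonhole principle over sweeps: during the consumption of the $2n+m$ $a$'s, some state $q$ recurs with identical visibility pattern both at a moment just before the third block is exhausted and at a moment later.

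This lets us shift one $a$ from the third block across the final $\border$ into $v$. Concretely, we compare $w$ with
$$w' = a^n\border^n a^{n-1}\border a\, b^m a^{m-1},$$
or a variant where one $a$ of the third block is exchanged with a $b$ of $v$ if balance requires it. The word $w'$ is not in $L_c$, since its third block has length $n-1$. In $w'$ the multiset of letters and the count of $\border$'s are unchanged, so the $\border$-bookkeeping of $C$ is preserved. We then show that $C$, with at most the segment between the two occurrences of $q$ re-timed, is also a valid computation on $w'$, so $M$ accepts $w'$, which is a contradiction.

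\textbf{Main obstacle.} The hardest step is making Step 2 rigorous. $M$ may interleave the consumption of $\border$'s and $a$'s in complicated ways over many sweeps. For example, it might consume the final $\border$ early and thereby merge the third block with $v$ in a controlled way, using the $b$'s of $v$ as a counter. We first try to show that the sub-computation restricted to positions to the right of the final $\border$ depends only on its crossing behaviour, that is, on the finitely many states in which $M$ passes the final $\border$ and the sweep boundaries. This is a crossing-sequence argument adapted to translucency. We then choose $m$ so large that, by pigeonhole, two crossing patterns coincide, which permits the cut-and-paste. If direct cut-and-paste fails, the fallback is a counting argument. The number of distinct ``information states'' at the moment the final $\border$ is consumed is polynomial in $n$. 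The number of pairs (length of the third block, $a$/$b$-imbalance of $v$) that have to be distinguished grows faster than this, so two inputs with different verdicts must lead to the same continuation.
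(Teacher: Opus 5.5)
Your proposal has a genuine gap: all of the work sits in the Step~2 claim that $C$, after ``re-timing'' the segment between two occurrences of $q$, is an accepting computation on $w'$. This is asserted, not shown, and your last paragraph concedes as much.

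\begin{itemize}
\item The bookkeeping is wrong. $w'=a^n\border^n a^{n-1}\border a\,b^m a^{m-1}$ has $2n+m-1$ letters $a$, while $w$ has $2n+m$, so the letter multiset is not preserved.
\item The pigeonhole principle does not give what you need. It only gives \emph{some} repeated state, not one straddling the end of the third block. In fact no property of a single accepting run on $w$ can force acceptance of $w'$: a plain DFA for $a^+\border^+a^+\border b^+a^+$ (an $\nrdfawtl$ with $\tau\equiv\emptyset$) accepts every $w$, with plenty of state repetitions, yet rejects every $w'$. A correct argument must use that $M$ rejects \emph{other} non-members to restrict the shape of $C$, and you never derive such a restriction.
\item A recurring state alone does not permit splicing in an $\nrnfawtl$, because a configuration also contains the head position and the entire remaining tape.
\item $w$ and $w'$ differ in the letter right after the separator ($b$ versus $a$). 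A state with both $a$ and $b$ visible positioned there reads them differently, and your argument would have to rule this out.
\item The fallbacks do not close the gap. A crossing sequence at the separator has one entry per sweep, and the number of sweeps is unbounded, so there is nothing to apply the pigeonhole principle to. The $\mathrm{poly}(n)$ bound on ``information states'' is unjustified: the configuration at that moment contains the unconsumed tape. Moreover, the separator need not be consumed at all, since acceptance is possible with translucent letters left over.
\end{itemize}

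The paper obtains the missing constraints locally. It works with $a^n\border^n a^n\border a^mb^m$ and first shows that $M$ never reads more than $|Q|$ consecutive letters of one block without skipping. Otherwise a state repeats, and deleting the loop's letters yields a non-member that reaches an \emph{identical} configuration. Hence $M$ must jump. A case analysis on which of $a$, $\border$, $b$ are translucent at each jump then produces, per case, a specific non-member on which the run reaches the same configuration by the end of the sweep:
\begin{itemize}
\item Case~1.1.1: the separator is moved left so that the rest of the third block merges with the leading $a$'s of $v$. Here it matters that $v$ begins with $a$'s, unlike your $b^ma^m$.
\item Case~1.1.2: a $\border$ is relocated from the second block into the third.
\item Case~1.2: an $a$ is moved from the first block into the second.
\item Case~2: sweeps that skip the whole $\border$-block consume only boundedly many letters before one of the above cases occurs.
\end{itemize}
Your idea of shifting $a$'s across the separator is essentially Case~1.1.1. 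It only becomes valid once the run has been pinned down in this case-by-case way, namely when the separator is read with $a$ translucent right after a bounded run in the third block.
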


\begin{proof}
\begin{sloppypar}
Assume in contrast to the assertion that $L_c$ is accepted by some 
$\nrnfawtl$ $M=\langle Q,\Sigma, q_0, \rightend, \tau,\delta\rangle$.
We consider an accepting computation on input
$a^n\border^n a^n \border a^mb^m$, where $m,n$ are large enough.
\end{sloppypar}

Suppose that $M$ reads more than $|Q|$ many symbols $a$, say $n_1$, from the
prefix of the input without skipping any translucent letters. Then, during the first $n_1$ steps one state, say $p_1$,
appears at least twice:
$$
(q_0 a^n\border^n a^n \border a^mb^m \rightend)
\vdash^* 
(p_1 a^{n-i_1} \border^n a^n \border a^mb^m \rightend)
\vdash^+
(p_1 a^{n-i_1-i_2} \border^n a^n \border a^mb^m \rightend),
$$
where $i_1,i_2\leq |Q|$.
Then, the computation 
$$
(q_0 a^{n-i_2}\border^n a^n \border a^mb^m \rightend)
\vdash^* 
(p_1 a^{n-i_2-i_1} \border^n a^n \border a^mb^m \rightend)
$$
continues accepting as well, a contradiction. So, $M$ jumps after at most
$|Q|$ steps on the prefix. If it jumps to the endmarker, the effect is a state
change only and we continue the argumentation in the same way as jumping to
anywhere else.

\medskip
\noindent
\textbf{Case~1.} We assume that $M$ jumps where the symbol $\border$ is not translucent:
$$
(q_0 a^n\border^n a^n \border a^mb^m \rightend)
\vdash^* 
(q_1 a^{n-x_1} \border^n a^n \border a^mb^m \rightend)
\vdash
(a^{n-x_1} q_2 \border^{n-1} a^n \border a^mb^m \rightend).
$$
We can repeat the argument from above. If $M$ reads next more than $|Q|$
consecutive symbols $\border$ a state, say $p_2$,  appears twice:
\begin{align*}
(q_0 a^n\border^n a^n \border a^mb^m \rightend)
&\vdash^+
(a^{n-x_1} q_2 \border^{n-1} a^n \border a^mb^m \rightend)\\
&\vdash^*
(a^{n-x_1} p_2 \border^{n-j_1} a^n \border a^mb^m \rightend)\\
&\vdash^+
(a^{n-x_1} p_2 \border^{n-j_1-j_2} a^n \border a^mb^m \rightend),
\end{align*}
where $1\leq j_1,j_2\leq |Q|$.
Then, the computation 
$$
(q_0 a^n\border^{n-j_2} a^n \border a^mb^m \rightend)
\vdash^*
(a^{n-x_1} p_2 \border^{n-j_1-j_2} a^n \border a^mb^m \rightend)
$$
continues accepting as well, a contradiction. So, $M$ jumps after at most
$|Q|$ steps on the factor $\border^n$.

\medskip
\noindent
\textbf{Case~1.1.} We assume furthermore that $M$ now jumps where the 
symbol $a$ is not translucent:
$$
(q_0 a^n\border^n a^n \border a^mb^m \rightend)
\vdash^+ 
(a^{n-x_1} \border^{n-y_1} q_3 a^{n-1} \border a^mb^m \rightend).
$$
Once more, we repeat the argument from above. If $M$ reads more than $|Q|$
consecutive symbols $a$ a state, say $p_3$,  appears twice:
\begin{align*}
(q_0 a^n\border^n a^n \border a^mb^m \rightend)
&\vdash^+ 
(a^{n-x_1} \border^{n-y_1} q_3 a^{n-1} \border a^mb^m \rightend)\\
&\vdash^*
(a^{n-x_1} \border^{n-y_1} p_3 a^{n-k_1} \border a^mb^m \rightend)\\
&\vdash^+
(a^{n-x_1} \border^{n-y_1} p_3 a^{n-k_1-k_2} \border a^mb^m \rightend),
\end{align*}
where $1\leq k_1,k_2\leq |Q|$.
Then, the computation 
$$
(q_0 a^n\border^n a^{n-k_2} \border a^mb^m \rightend)
\vdash^+
(a^{n-x_1} \border^{n-y_1} p_3 a^{n-k_1-k_2} \border a^mb^m \rightend)
$$
continues accepting as well, a contradiction. So, $M$ jumps after at most
$|Q|$ steps on the second factor $a^n$.

\medskip
\noindent
\textbf{Case~1.1.1.} We assume furthermore that $M$ now jumps where the 
symbol $\border$ is not translucent:
$$
(q_0 a^n\border^n a^n \border a^mb^m \rightend)
\vdash^+ 
(a^{n-x_1} \border^{n-y_1} a^{n-z_1} q_4 a^mb^m \rightend).
$$
Once more, we repeat the argument from above and obtain that $M$ reads at most
$|Q|$ symbols from the factor $a^m$ until it jumps again. This time it may
jump the next $b$, where it also reads at most
$|Q|$ symbols from the factor $b^m$, or it jumps directly 
to the endmarker where a returning step is performed.
So, we have
\begin{align*}
(q_0 a^n\border^n a^n \border a^mb^m \rightend)
&\vdash^+ 
(a^{n-x_1} \border^{n-y_1} a^{n-z_1} q_4 a^mb^m \rightend)\\
&\vdash^+ 
(p_0 a^{n-x_1} \border^{n-y_1} a^{n-z_1} a^{m-\ell_1}b^{m-\ell_2} \rightend),
\end{align*}
where $0\leq \ell_1,\ell_2\leq |Q|$. 
Then, the computation 
$$
(q_0 a^n\border^n a^{z_1} \border a^{m+n-z_1}b^m \rightend)
\vdash^+
(p_0 a^{n-x_1} \border^{n-y_1} a^{n+m-z_1-\ell_1} b^{m-\ell_2} \rightend)
$$
continues accepting as well, a contradiction. This concludes Case 1.1.1
and we know that $M$ jumps over the remaining second factor $a^{n-z_1}$ either
to the first symbol $b$ or to the endmarker.

\medskip
\noindent
\textbf{Case~1.1.2.} We assume furthermore that $M$ now jumps where the 
symbol $\border$ is translucent.
Since in this case, by assumption that $M$ jumps, the symbol $a$ is
translucent as well, one possibility is that $M$ jumps to the first $b$. As
before, it may read at most $|Q|$ symbols from the factor $b^m$ until
it jumps to the endmarker and performs a returning step.
So, we obtain in the current case
\begin{align*}
(q_0 a^n\border^n a^n \border a^mb^m \rightend)
&\vdash^+ 
(a^{n-x_1} \border^{n-y_1} q_5 a^{n-z_1} \border a^m b^{m} \rightend)\\
&\vdash^+
(p_0 a^{n-x_1} \border^{n-y_1} a^{n-z_1} \border a^m b^{m-\ell_2} \rightend),
\end{align*}
where $\{a,\border\}\subseteq \tau(q_5)$, $0\leq\ell_2\leq |Q|$.
Then, the computation 
\begin{align*}
(q_0 a^n\border^{n-1} a^{z_1} \border a^{n-z_1}\border a^{m} b^m \rightend)
\vdash^+
(a^{n-x_1} \border^{n-1-y_1} q_5 \border a^{n-z_1} \border a^m b^{m} \rightend)\\
\vdash^+
(p_0 a^{n-x_1} \border^{n-1-y_1} \border a^{n-z_1} \border a^m b^{m-\ell_2} \rightend),
\end{align*}
continues accepting as well, a contradiction. This concludes Case 1.1.2
and \mbox{Case~1.1.} Now, we know that if $M$ jumps from the prefix $a^n$ to the
first $\border$ (Case~1), then its next jump is with translucent $a$.

\medskip
\noindent
\textbf{Case~1.2.} We assume that $M$ jumps from the infix $\border^n$
with translucent $a$. Dependent on whether $b\in\tau(r_1)$,
now $M$ jumps to the first $b$, where it may read at most $|Q|$ symbols from
the factor $b^m$ and then returns or it returns immediately:
\begin{align*}
(q_0 a^n\border^n a^n \border a^mb^m \rightend)
&\vdash^+ 
(a^{n-x_1}  r_1 \border^{n-y_1} a^{n} \border a^mb^m \rightend)\\
&\vdash^+ 
(r_0 a^{n-x_1} \border^{n-y_1} a^{n} \border a^m b^{m-\ell_2} \rightend),
\end{align*}
where $\{a,\border\}\subseteq \tau(r_1)$.
Then, the computation 
\begin{align*}
(q_0 a^{n-1}\border^{y_1} a \border^{n-y_1} a^{n} \border a^{m}b^m \rightend)
&\vdash^+
(a^{n-1-x_1}  r_1 a  \border^{n-y_1} a^{n} \border a^mb^m \rightend)\\
&\vdash^+ 
(r_0 a^{n-1-x_1}a \border^{n-y_1} a^{n} \border a^mb^{m-\ell_2} \rightend)
\end{align*}
continues accepting as well, a contradiction. This concludes Case 1.2
and Case~1. Now, we know that $M$ jumps from the prefix $a^n$
with translucent~$\border$. 

\medskip
\noindent
\textbf{Case~2.} We assume that $M$ jumps where the symbol $\border$ is
translucent, that is, $\{a,\border\}\subseteq \tau(q_1)$:
$$
(q_0 a^n\border^n a^n \border a^mb^m \rightend)
\vdash^* 
(q_1 a^{n-x_1} \border^n a^n \border a^mb^m \rightend)
\vdash^+
(s_0a^{n-x_1} \border^{n} a^n \border a^mb^{m-\ell_2} \rightend),
$$
where $\ell_2=0$ if $b\in \tau(q_1)$. This is one sweep over the input. We
consider the next sweep. Either it leads to one of the cases already treated
or it yields the configuration 
$$
(s_1a^{n-x_1-x_2} \border^{n} a^n \border a^mb^{m-\ell_2-\ell_3} \rightend)
$$
similar as after the first sweep. If $x_1+x_2\geq |Q|$, we obtain a contradiction
with the arguments before Case~1, namely, that $M$ has to jump from the prefix
before having read $|Q|$ consecutive symbols. Similar arguments can be used to
derive a contradiction if $\ell_2+\ell_3\geq |Q|$.
If $x_1+x_2 < |Q|$ and $\ell_2+\ell_3< |Q|$, we consider
the next sweep. Again, either it leads to one of the cases already treated
or it yields the configuration 
$$
(s_2a^{n-x_1-x_2-x_3} \border^{n} a^n \border a^mb^{m-\ell_2-\ell_3-\ell_4} \rightend).
$$
Concluding inductively, since the computation is accepting, after some sweeps 
there must occur a configuration of the cases already treated. Note that $M$
must halt for acceptance. 
This implies a contradiction also for Case~2 and concludes the proof.
\strut
\end{proof}

\begin{theorem}\label{theo:nr-not-closed-concat}
The families $\lfam(\nrdfawtl)$ and
$\lfam(\nrnfawtl)$ are not closed under concatenation.
\end{theorem}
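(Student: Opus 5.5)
Since Lemma~\ref{lem:lc-not-nrnfawtl} shows that $L_c$ is not accepted by any $\nrnfawtl$, and hence by no $\nrdfawtl$, it suffices to show that both factors $L_{c,1}$ and $L_{c,2}$ are accepted by some $\nrdfawtl$. Because $\lfam(\nrdfawtl)\subseteq\lfam(\nrnfawtl)$, this settles both families at once: each is assumed closed under concatenation, so $L_c=L_{c,1}\cdot L_{c,2}$ would lie in $\lfam(\nrdfawtl)\subseteq\lfam(\nrnfawtl)$, contradicting the lemma.

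For $L_{c,2}$, I would rely on the known fact that $L_\subtext{eq}$ is accepted by some $\dfawtl$, and hence by some $\nrdfawtl$. Concretely, a device repeatedly sweeps, each sweep deleting one $a$ and one $b$, and it accepts when the tape is empty. To obtain $L_{c,2}$, I would prepend a phase in the initial state in which nothing is translucent. In that phase the automaton must read $\border$ as the first symbol; this is deterministic because it is the very first visible letter. It then enters the $L_\subtext{eq}$ simulation, with $\border$ never occurring again.

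The remaining points of this construction are the following.
\begin{itemize}
\item Any later $\border$ must lead to rejection: in the simulation states $\border$ is simply not translucent and its transition is undefined.
\item Nonemptiness of $v$ must be checked, for instance by remembering whether at least one pair was deleted before the endmarker is seen with an empty tape.
\item Since the non-returning mode continues after each deletion, each sweep is designed as: in state $s$ (with $b$ translucent) read the first $a$, switch to a state with $a$ translucent, read the first $b$, then skip to the endmarker and return.
\item Any mismatch, meaning a state seeing only the endmarker while letters remain, is rejected.
\end{itemize}

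For $L_{c,1}=\{a^n\border^na^n\}$, I would again use one deleted letter from each block per sweep. In state $p_1$ nothing is translucent: it reads an $a$ from the first block, or, if the first visible letter is the endmarker, it accepts because the tape is empty. It goes to $p_2$ with $a$ translucent, which must see $\border$ and deletes it. It then goes to $p_3$ with $\border$ translucent, which must see $a$ and deletes it. Next, $p_4$ has $a$ and $\border$ translucent and must see the endmarker, which it uses to return to $p_1$. Undefined transitions reject in all other situations.

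The part needing care, and the main obstacle, is verifying that these deletion sweeps enforce the correct block structure, since translucency can hide misplaced letters. I would argue this via an invariant: the remaining tape always has the form $a^i\border^ja^k$ (or the respective form for $L_{c,2}$), with the counts decreasing in lockstep. Suppose, for example, the first block were shorter. Then $p_1$ would see $\border$ first, and since $\delta(p_1,\border)$ is undefined the input is rejected. An interleaving such as $a\border a\border a$ is handled the same way: $p_1$ consumes the first $a$, $p_2$ the first $\border$, $p_3$ the next $a$, and $p_4$ with $a,\border$ translucent reaches the end; on the next sweep $p_1$ faces $\border$ and rejects. By the invariant, acceptance happens exactly when all blocks empty simultaneously in the correct order.
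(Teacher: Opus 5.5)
Your reduction is the paper's: Lemma~\ref{lem:lc-not-nrnfawtl} together with $L_{c,1},L_{c,2}\in\lfam(\nrdfawtl)$. The paper gets these two memberships by pointing to known automata. You instead commit to explicit automata, and both are wrong as described.

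In your automaton for $L_{c,1}$, the return state $p_4$ has both $a$ and $\border$ translucent. So nothing ever checks that no $\border$ occurs after the $a$ deleted by $p_3$. Your invariant that the tape has the form $a^i\border^ja^k$ is presupposed rather than enforced. For example, $aa\border a\border a\notin L_{c,1}$ is accepted:
\[
(p_1\,aa\border a\border a\rightend)\vdash(p_2\,a\border a\border a\rightend)\vdash(a\,p_3\,a\border a\rightend)\vdash(a\,p_4\,\border a\rightend)\vdash(p_1\,a\border a\rightend).
\]
The second sweep then erases $a\border a$ in the same way, and $p_1$ accepts on the empty tape. The repair is $\tau(p_4)=\{a\}$. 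Then every completed sweep certifies four things: the leftmost letter is $a$, the first non-$a$ is $\border$, the first non-$\border$ after it is $a$, and only $a$'s follow. Hence the current tape lies in $a^+\border^+a^+$, the sweep removes exactly one letter from each block, and the tape is emptied precisely on inputs $a^n\border^na^n$.

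For $L_{c,2}$ the problem is the non-returning head movement you yourself point out. Your sweep-start state $s$ has $b$ translucent, so it looks through the $b$'s in front of the first $a$. Deleting that $a$ leaves those $b$'s to the left of the head, where the next state $t$ (with $a$ translucent) can no longer see them. On $\border ba\in L_{c,2}$, after the $\border$ is read you get $(s\,ba\rightend)\vdash(b\,t\rightend)$. Now $t$ faces the endmarker although a $b$ remains, so this word of the language is rejected.

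The fix is to start every sweep in a state with nothing translucent that deletes the leftmost letter $x\in\{a,b\}$. Then, with $x$ translucent, delete the first occurrence of the other letter; since $x$ was leftmost, that letter necessarily lies to the right of the head. These are the same mechanisms the paper uses in its automaton for $L^R_c$ in Lemma~\ref{lem:lcr-accepted-nrndawtl}: states $q_0,q_a,q_b,r_1$ for the balanced part, and $p_0,p_a,p_{\border},r_2$ with $\tau(r_2)=\{a\}$ for $L_{c,1}$. With both automata repaired, your argument is complete.
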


\begin{proof}
By Lemma~\ref{lem:lc-not-nrnfawtl}, it is sufficient 
to show that $L_{c,1}$ and $L_{c,2}$ are accepted by $\nrdfawtl$.

The language $L = \{\, v\in \{a,b\}^+ \mid |v|_a=|v|_b\,\}$ is accepted by some
$\dfawtl$~\cite{Nagy:2013:dpcdsosdra}. The construction can straightforwardly
be extended to some $\dfawtl$ that accepts~$L_{c,2}$.

The language $\{\, a^nb^nc^n \mid n \geq 0\,\}$ is accepted by some
$\nrdfawtl$~\cite{mraz:2023:nrfawtl}. This construction can straightforwardly
be extended to some $\nrdfawtl$ that accepts~$L_{c,1}$.
\end{proof}

The proofs of Lemma~\ref{lem:lc-not-nrnfawtl} and Theorem~\ref{theo:nr-not-closed-concat} also
enable us to obtain the non-closure under Kleene star.

\begin{theorem}\label{theo:nr-not-closed-star}
The families $\lfam(\nrdfawtl)$ and
$\lfam(\nrnfawtl)$ are not closed under Kleene star.
\end{theorem}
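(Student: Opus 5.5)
The plan is to build a language $L_s$ accepted by some $\nrdfawtl$ whose Kleene star, intersected with a suitable regular language, isolates (essentially) $L_c$. Since $\lfam(\nrnfawtl)$ is closed under intersection with regular languages only in the trivial direction we do not have, we avoid intersection. Instead, we argue directly that $L_s^*$ is not accepted by any $\nrnfawtl$, reusing the pumping argument of Lemma~\ref{lem:lc-not-nrnfawtl}.

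\textbf{The language.} As the candidate, take $L_s = L_{c,1}\cup L_{c,2}$, where a fresh marker is added if necessary to make the factorization unique. A natural choice is $L_s = \{\, a^n\border^n a^n \mid n\geq 0\,\} \cup \{\, \border v \mid v\in\{a,b\}^+, |v|_a=|v|_b\,\}$. Acceptance by an $\nrdfawtl$ then needs to be established. By the proof of Theorem~\ref{theo:nr-not-closed-concat}, both parts are accepted by $\nrdfawtl$s. For closure of $\lfam(\nrdfawtl)$ under this particular union, I would not rely on a general closure result. Instead, the automaton will inspect its first visible symbol and decide deterministically which component to simulate. Words in the first part start with $a$ (or are empty). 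Words in the second part start with $\border$. So a single deterministic branch on the first letter suffices.

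\textbf{Non-acceptance of the star.} Suppose $M$ is an $\nrnfawtl$ accepting $L_s^*$. Consider the inputs $a^n\border^n a^n\border a^mb^m \in L_{c,1}L_{c,2}\subseteq L_s^*$ used in Lemma~\ref{lem:lc-not-nrnfawtl}. I would rerun the whole case analysis of that lemma verbatim. In each case, the pumped word produced there must be checked to lie outside $L_s^*$, not merely outside $L_c$. The pumped words have the following forms:
\begin{itemize}
\item $a^{n-i}\border^n a^n\border a^mb^m$;
\item $a^n\border^{n-j}a^n\border a^mb^m$;
\item $a^n\border^n a^{n-k}\border a^mb^m$;
\item $a^n\border^n a^{z_1}\border a^{m+n-z_1}b^m$;
\item $a^n\border^{n-1}a^{z_1}\border a^{n-z_1}\border a^mb^m$;
\item $a^{n-1}\border^{y_1}a\border^{n-y_1}a^n\border a^mb^m$.
\end{itemize}

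To verify each membership claim, decompose a word of $L_s^*$ into factors. Every factor of $L_{c,2}$ contains exactly one $\border$, located at its start, and no $\border$ occurs elsewhere in that factor. Hence the $\border$-runs of length at least $2$ must lie inside a single $L_{c,1}$ factor, and that factor determines the lengths of the neighbouring $a$-blocks. From this rigidity, the only factorizations of words of the shape $a^p\border^q a^r\border w$ with $q\ge2$ are those with $p=q=r$ followed by $\border w\in L_{c,2}$, or products of $L_{c,2}$ pieces.

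\textbf{Main obstacle.} The hardest step is excluding the alternative factorizations that the star introduces. One example is splitting $\border^{n}$ into several single-$\border$ pieces from $L_{c,2}$. This is ruled out because those pieces need $b$'s between consecutive $\border$'s, and none appear before the final block. Another is treating $a^{z_1}\border a^{n-z_1}\border a^mb^m$ as several $L_{c,2}$ factors. This is ruled out because the middle segment $a^{n-z_1}$ has no $b$'s, so it cannot be balanced. I would check each of the six pumped words case by case using the observation that every $L_{c,2}$ factor contains at least one $b$. This already forces any factor lying before the $b$-block to be an $L_{c,1}$ factor, where the counting mismatch yields the contradiction. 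Finally, Case~2 of the lemma (repeated sweeps) transfers unchanged, since it only reduces to the earlier cases. This shows $L_s^*\notin\lfam(\nrnfawtl)$, while $L_s\in\lfam(\nrdfawtl)$, giving non-closure for both families.
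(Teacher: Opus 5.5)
Your proposal is correct and follows the paper's proof essentially verbatim. It uses the same witness $L_{c,1}\cup L_{c,2}$, accepted by an $\nrdfawtl$ that branches on whether the first symbol is $a$ or $\border$, and the same rerun of the case analysis of Lemma~\ref{lem:lc-not-nrnfawtl}, checking that exactly the six pumped words you list lie outside the star. One small correction: your rigidity remark overstates the case, since $p=q=r$ is not forced (e.g.\ $a^2\border^2a^3\border a = a^2\border^2a^2\cdot a\border a$ lies in the star); only $p=q\le r$ is forced, and that weaker fact already excludes all six words, because each has $p\neq q$ or $r<p$.
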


\begin{proof}
Since language $L_{c,2}$ always starts with a $\border$-symbol and $L_{c,1}$ not, it is straightforward to construct an $\nrdfawtl$
accepting $L'_c=L_{c,1} \cup L_{c,2}$. 
Next, we argue that ${L'_c}^*$ is not accepted by any $\nrnfawtl$. To this end, our approach
uses almost literally the proof of Lemma~\ref{lem:lc-not-nrnfawtl} where 
an accepting computation on input $w=a^n\border^n a^n \border a^mb^m$ for $m,n$ large enough is considered.
Since $w \in {L'_c}^*$, we can use the same argumentation, but we have to take care of the
following: whenever a contradiction due to a word not in $L_c$ is obtained there, this word must not belong to ${L'_c}^*$ here.
However, inspecting the words occurring before Case~1, in Case~1, Case~1.1, Case~1.1.1, Case~1.1.2, and Case~1.2 gives that each word is not in~${L'_c}^*$
Hence, we can conclude as in the proof of Lemma~\ref{lem:lc-not-nrnfawtl} that~${L'_c}^*$ is not accepted by 
any $\nrnfawtl$.

Then, the assumption that $\lfam(\nrdfawtl)$ or $\lfam(\nrnfawtl)$ is closed under Kleene star, leads in both
cases to a contradiction and we obtain the non-closure under Kleene star.
\end{proof}

Next, we turn to the operation of reversal and use again the witness 
language~$L_c$ to show the non-closure of the family $\lfam(\nrnfawtl)$.

\begin{lemma}\label{lem:lcr-accepted-nrndawtl}
The language $L^R_c$ is accepted by some $\nrdfawtl$.
\end{lemma}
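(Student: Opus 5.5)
We first compute the language explicitly. Reversal maps $a^n\border^n a^n$ to itself and $\border v$ to $v^R\border$. Also $v\in\{a,b\}^+$ with $|v|_a=|v|_b$ if and only if $v^R$ has this property. Hence
$$
L_c^R=\{\, v\border a^n\border^n a^n \mid v\in\{a,b\}^+,\ |v|_a=|v|_b,\ n\geq 0\,\}.
$$
The plan is to build an $\nrdfawtl$ $M$ that works in two phases, each consisting of sweeps. In every sweep $M$ consumes a bounded number of letters and then jumps to the endmarker, where it returns. Throughout, we use that in the non-returning mode the head stays directly behind the letter just read.

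\emph{Phase~1 (balance of $v$).} The sweep starts in a state $s$ with $\tau(s)=\emptyset$, so $M$ sees the first remaining letter.
\begin{itemize}
\item If it is $a$, then $M$ reads it and enters a state $s_a$ with $\tau(s_a)=\{a\}$. There, $b$ is read and a visible $\border$ leads to rejection. It then enters a state $t$ with $\tau(t)=\{a,b,\border\}$, which jumps to $\rightend$ and returns to $s$.
\item If it is $b$, then $M$ acts symmetrically with a state $s_b$ with $\tau(s_b)=\{b\}$. Here an $a$ visible in $s_b$ can only lie in $v$, because $v$ precedes the first $\border$.
\item If it is $\border$, then the prefix $v$ is exhausted and $M$ reads this $\border$ and enters Phase~2.
\end{itemize}
Nonemptiness of $v$ is enforced by a copy of $s$ used only in the very first sweep, in which $\border$ or $\rightend$ is rejected. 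Since each Phase~1 sweep deletes exactly one $a$ and one $b$ from the part before the first $\border$, Phase~1 ends successfully if and only if $|v|_a=|v|_b$.

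\emph{Phase~2 ($a^n\border^n a^n$).} We adapt the $a^nb^nc^n$ construction of~\cite{mraz:2023:nrfawtl}, using the position of the head to tell the two $a$-blocks apart.
\begin{itemize}
\item In state $u_1$ with $\tau(u_1)=\emptyset$, $M$ reads the first letter, which must be $a$.
\item In state $u_2$ with $\tau(u_2)=\{a\}$, $M$ reads the first $\border$. Now the head is just behind this $\border$, so the first-block $a$'s are behind it.
\item In state $u_3$ with $\tau(u_3)=\{\border\}$, $M$ reads the next $a$. Because of the head position, this $a$ lies in the third block.
\item In state $u_4$ with $\tau(u_4)=\{a,\border\}$, $M$ jumps to $\rightend$ and returns to $u_1$.
\end{itemize}
$M$ accepts iff $u_1$ sees $\rightend$ directly, which also covers $n=0$. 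Any other undefined situation rejects, for instance $u_1$ seeing $\border$, or $u_2$, $u_3$ seeing $\rightend$. Everything is deterministic.

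The main obstacle is correctness on inputs of the wrong shape. Examples are inputs with $b$'s after the first $\border$, several interleaved blocks, or words like $a\border a\border a$. The point is that $M$ never checks the regular format explicitly. We would argue by an invariant on the remaining tape at the start of each Phase~2 sweep: it must have the form $a^k\border^k a^k$. We would show that any deviation eventually makes some state see a forbidden letter or $\rightend$ too early. For instance, a $b$ in the suffix becomes visible in $u_1$ at the latest once everything before it is consumed.

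If this case analysis becomes unwieldy, the first thing to try is to add a finite-state format check to the sweeps. In each sweep, after the last letter is read, $M$ would not make everything translucent. Instead it would scan the rest with only those letters translucent that are allowed there, so that a visible forbidden letter leads to rejection. For example, in Phase~2 $M$ would reject any visible $b$ by keeping $b\notin\tau$ in $u_4$.
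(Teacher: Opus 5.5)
Your two-phase plan is the same as the paper's, but Phase~2 as you specify it is incorrect. The invariant you intend to prove, that every Phase~2 sweep starts on a tape $a^k\border^k a^k$ unless the computation rejects, is false for your automaton. The culprit is $\tau(u_4)=\{a,\border\}$. After $u_3$ has read the first $a$ of the third block, nothing rules out further $\border$'s to its right, so blocks can merge as they are consumed. On the Phase~2 remainder $aa\border a\border a$ you get
\begin{align*}
(u_1\, aa\border a\border a\rightend)&\vdash(u_2\, a\border a\border a\rightend)\vdash(a\,u_3\, a\border a\rightend)\\
&\vdash(a\,u_4\,\border a\rightend)\vdash(u_1\, a\border a\rightend).
\end{align*}
The second sweep then consumes $a\border a$, and afterwards $u_1$ sees $\rightend$ and accepts. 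Hence your automaton accepts $ab\border aa\border a\border a$, which is not in $L_c^R$. The format check you give as an example in your last paragraph, keeping $b\notin\tau(u_4)$, does not help here, since your $u_4$ already sees $b$.

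The fix is to make only $a$ translucent in $u_4$. This is exactly the paper's choice $\tau(r_2)=\{a\}$, and it is your format-check idea applied to $\border$ instead of $b$: to the right of the third-block $a$ only $a$'s may occur. With this change, a Phase~2 sweep completes if and only if the remaining tape is $a^i\border^ja^k$ with $i,j,k\geq 1$, and it leaves $a^{i-1}\border^{j-1}a^{k-1}$. The next sweep accepts only if this tape is empty; otherwise it again requires all three exponents to be positive. By induction, Phase~2 then accepts exactly the words $a^n\border^na^n$ with $n\geq 0$, and your construction becomes correct. Your separate first-sweep copy of $s$ enforcing $v\neq\lambda$ is in fact more careful than the paper. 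The paper's transition $\delta(q_0,\border)=p_0$ lets its automaton accept, e.g., the word $\border$, which is not in $L_c^R$.
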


\begin{proof}
An $\nrdfawtl$ accepting $L^R_c = L^R_{c,2} \cdot L^R_{c,1}$
works in two phases. In a first phase, it verifies that the input prefix
up to the leftmost $\border$ meets the condition of~$L^R_{c,2}$. Then, on
reading the leftmost $\border$ the second phase is started. In this phase
some $\nrdfawtl$ for $L^R_{c,1}=L_{c,1}$ is simulated. 
For the sake of completeness, we construct 
an $\nrdfawtl$ $M=\langle Q,\{a,b,\border\}, q_0, \rightend, \tau,\delta\rangle$
accepting $L^R_c$ as follows.
\begin{itemize}
\item $Q=\{q_0,q_a,q_b, p_0,p_a,p_{\border},r_1,r_2\}$,\smallskip
\item 
$\tau(q_0)=\tau(p_0)=\emptyset$, $\tau(q_a)=\tau(p_a)=\{a\}$, $\tau(q_b)=\{b\}$,\\[1mm]
  $\tau(p_\border)=\{\border\}$, $\tau(r_1)=\{a,b,\border\}$, $\tau(r_2)=\{a\}$.
\end{itemize}

\begin{center}
\renewcommand{\arraystretch}{1.1}\tabcolsep2pt
\begin{tabular}[t]{rccl}
(1) &  $\delta(q_0,a)$ &=& $q_a$,\\
(2) &  $\delta(q_0,b)$ &=& $q_b$,\\
(3) &  $\delta(q_0,\border)$ &=& $p_0$,\\
(4) &  $\delta(q_a,b)$ &=& $r_1$,\\
(5) &  $\delta(q_b,a)$ &=& $r_1$,\\
(6) &  $\delta(r_1,\rightend)$ &=& $q_0$,\\
\end{tabular}
\qquad\qquad
\begin{tabular}[t]{rccl}
(7) &  $\delta(p_0,a)$ &=& $p_a$,\\
(8) &  $\delta(p_a,\border)$ &=& $p_\border$,\\
(9) &  $\delta(p_\border,a)$ &=& $r_2$,\\
(10) &  $\delta(r_2,\rightend)$ &=& $p_0$,\\
(11) &  $\delta(p_0,\rightend)$ &=& $\accept$.\\
\end{tabular}
\end{center}

In the first phase, $M$ reads the first input symbol (Transitions~1 and~2),
memorizes it in its state, and jumps to the next matching symbol
(Transitions~4 and~5). The jump is with $\border$ visible. So the matching
symbols have to appear before the leftmost $\border$. Then state $r_1$ is
entered with all symbols translucent, so that a return step is caused
(Transition~6). When all symbol up to the first $\border$ are matched, the
$\border$ is read and the prefix is verified to belong to~$L^R_{c,2}$. In case
of any mismatch, the computation gets stuck and rejects. Now the second phase
starts. This phase consists of several sweeps. At the beginning of a sweep,
an $a$ is read and memorized in the state (Transition~7). Then only $a$'s become
translucent and the leftmost $\border$ is read and memorized in the state
(Transition~8). Now, only~$\border$'s become
translucent and the next $a$ is read (Transition~9). 
So far, in the current sweep one symbol
is read from each of the three blocks. Being in state~$r_2$ only symbol $a$ is
translucent. If the next visible symbol is the endmarker, the input to the
second phase is verified to be of the form $a^+\border^+ a^+$. Next, a return
step is performed (Transition~10) and the next sweep starts. At the beginning
of each sweep, $M$ is in state $p_0$ for which all symbols are visible. So,
if and only if the symbol seen at the beginning of a sweep is the endmarker,
the input to the second phase is verified to belong to~$L^R_{c,1}$. Only in this
case, $M$ can accept (Transition~11).
\end{proof}

Lemma~\ref{lem:lc-not-nrnfawtl} and Lemma~\ref{lem:lcr-accepted-nrndawtl}
imply the next corollary.

\begin{corollary}\label{cor:nrnfa-not-closed-rev}
The families $\lfam(\nrdfawtl)$ and $\lfam(\nrnfawtl)$
are not closed under reversal.
\end{corollary}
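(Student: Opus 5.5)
The corollary follows by combining Lemma~\ref{lem:lc-not-nrnfawtl} and Lemma~\ref{lem:lcr-accepted-nrndawtl}, together with the observation that reversal is an involution. The witness language is $L^R_c$.

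By Lemma~\ref{lem:lcr-accepted-nrndawtl}, $L^R_c$ is accepted by some $\nrdfawtl$. Every $\nrdfawtl$ is a special case of an $\nrnfawtl$, so $L^R_c$ belongs to both $\lfam(\nrdfawtl)$ and $\lfam(\nrnfawtl)$.

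Its reversal is $(L^R_c)^R = L_c$. By Lemma~\ref{lem:lc-not-nrnfawtl}, $L_c$ is not accepted by any $\nrnfawtl$. A fortiori, it is not accepted by any $\nrdfawtl$, because $\lfam(\nrdfawtl)\subseteq\lfam(\nrnfawtl)$.

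Hence both families contain a language whose reversal lies outside the family, and neither family is closed under reversal.

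The substantive work is already contained in the two lemmas: the pumping-style case analysis for $L_c$ and the explicit construction for $L^R_c$. The only point to verify here is the inclusion $\lfam(\nrdfawtl)\subseteq\lfam(\nrnfawtl)$, which holds directly from the definition, since determinism only adds the restriction $|\delta(q,a)|\le 1$.
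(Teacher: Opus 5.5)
Your proof is correct and is exactly the paper's argument. The paper derives the corollary directly from Lemma~\ref{lem:lc-not-nrnfawtl} and Lemma~\ref{lem:lcr-accepted-nrndawtl}, using $L^R_c$ as the witness, $(L^R_c)^R=L_c$, and the inclusion $\lfam(\nrdfawtl)\subseteq\lfam(\nrnfawtl)$, which you state explicitly and the paper leaves implicit.
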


\subsection{Non-Closure under Inverse Homomorphism}\label{subsec:non-closed-invhom}

In this section, we obtain the result that all four language families studied in this paper are not
closed under inverse non-erasing homomorphism. We look first on the returning case and note that
it is shown in~\cite{nagy:2012:oncdsosdrawwso} in terms of CD-systems of stateless deterministic restarting automata 
that the language family $\lfam(\nfawtl)$ is closed under inverse projections. However, this result cannot be 
generalized, since we will show the non-closure under inverse non-erasing homomorphism in the following. We start
with presenting a language $L_{cd}$ that is accepted by a $\dfawtl$ and will be used later to obtain the non-closure result.
Let the language $L_{cd}\subseteq\{a, b, c, d, \border\}^*$ be 
defined as
$$
L_{cd} = \{\, u\border v \mid u \in \{a,c\}^*,\, v \in \{b,c,d\}^*,\, |uv|_c=|v|_d \,\}.
$$ 

\begin{lemma}\label{lem:invhom1:pos}
Language $L_{cd}$ belongs to $\lfam(\dfawtl)$.
\end{lemma}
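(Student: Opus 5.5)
The plan is to build an explicit $\dfawtl$ $M$ for $L_{cd}$ that uses the returning mode as a sequence of cycles, in the style of the proof of Lemma~\ref{lem:lcr-accepted-nrndawtl}. Each cycle reads one visible letter and returns; a pair of consecutive cycles is used to delete one $c$ together with one $d$. Acceptance is possible only after every $c$ and $d$ has been paired and the remaining letters have been shown to have the shape $a^*\border b^*$.

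The counting is the standard one for $L_\subtext{eq}$~\cite{Nagy:2013:dpcdsosdra}.
\begin{itemize}
\item In a state whose only visible letters are $c$, $d$ (and $\border$ when needed), $M$ reads the leftmost relevant letter and remembers in its state that a partner of the other kind is still owed.
\item In the next cycle, with the remembered letter made translucent together with $a$ and $b$, $M$ deletes the leftmost letter of the other kind.
\item If no partner exists, the endmarker becomes visible in a state without a transition, and $M$ rejects.
\end{itemize}
The $a$'s and $b$'s play no role in the count, so they are translucent throughout this phase.

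The main obstacle is the format check. The constraints $u\in\{a,c\}^*$ and $v\in\{b,c,d\}^*$, and exactly one $\border$, must be verified, but any letter that is read is consumed, and the deleted $c$'s and $d$'s may hide a misplaced $d$ or $b$ in front of $\border$. For example, the word $cd\border$ must be rejected.

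My first approach is to do the checks that involve $\border$ before $\border$ itself is ever deleted. The automaton eats the prefix $u$ from the left in a state with nothing translucent.
\begin{itemize}
\item A leading $a$ is simply deleted.
\item A leading $c$ is deleted and sends $M$ into a state $r$ with $\tau(r)=\{a,b,c,\border\}$, which deletes the leftmost $d$.
\item A leading $b$ or $d$ leads to rejection.
\end{itemize}
To rule out that $r$ deletes a $d$ lying inside $u$, I would add, before the counting starts, a verification stage in which $\border$ is made the unique visible letter that is not in $\{a,c\}$. In such a state a visible $b$ or $d$ in front of $\border$ causes rejection. Getting this stage right without losing the boundary is the step I expect to be delicate.

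If keeping $\border$ turns out to be impossible, I would use the fallback of consuming $\border$ in a state with $\tau=\{a,c\}$, which certifies that no $b$ or $d$ occurs in $u$. After that the automaton works only on $uv$, since the counting condition $|uv|_c=|v|_d$ does not depend on the position of the boundary. The only remaining check is that no $a$ occurs in $v$. I would then compare which of the two shapes, $\{a,c\}^*\{b,c,d\}^*$ or $\{a,c\}^*\border\{b,c,d\}^*$, the automaton can really enforce.

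Once the states and the translucency sets are fixed, I would list the transitions in a table as in Lemma~\ref{lem:lcr-accepted-nrndawtl}. Correctness then needs two arguments. For soundness, every accepting run pairs each $c$ with a distinct $d$ and passes the shape check. For completeness, on every word of $L_{cd}$ the deterministic computation follows exactly this intended cycle structure and finally sees $\rightend$ in the accepting state.
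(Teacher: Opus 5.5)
Your proposal has a genuine gap. It never fixes an automaton, and the part you leave open, the format check, is the whole substance of the lemma. Moreover, the fallback you describe cannot work. If $\border$ is consumed first, in a state with $\tau=\{a,c\}$, the $a$'s of $u$ remain on the tape and can no longer be told apart from $a$'s of $v$. For instance, $c\border ad\notin L_{cd}$ and $ca\border d\in L_{cd}$ both lead to the same configuration with remaining tape $cad$, so they are accepted or rejected together. In addition, your counting phase keeps $a$ translucent, so a misplaced $a$ in $v$ is never seen at all.

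The verification stage of your first approach does not rescue this. If $\border$ is the only visible letter outside $\{a,c\}$, then $b$ and $d$ are translucent and cannot cause a rejection. If you mean $\tau=\{a,c\}$, then reading $\border$ deletes it, which is the fallback again.

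The missing idea is that the boundary need not survive, provided the $a$'s of $u$ are gone when $\border$ is read. The paper starts in $q_0$ with $\tau(q_0)=\{c\}$, $\delta(q_0,a)=q_0$ and $\delta(q_0,\border)=q_1$. This deletes the $a$'s of $u$ one per cycle, rejects as soon as a $b$ or $d$ is visible before $\border$, and then consumes $\border$. Afterwards every $a$ (and every further $\border$) on the tape lies in $v$. So keeping $a$ and $\border$ visible without transitions in all later states enforces $v\in\{b,c,d\}^*$. The $c$'s left over from $u$ are harmless, because only $|uv|_c$ matters.

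Counting is then done by $\tau(q_1)=\{b,d\}$ with $\delta(q_1,c)=q_2$, and $\tau(q_2)=\{b,c\}$ with $\delta(q_2,d)=q_1$. When $q_1$ sees $\rightend$, the automaton restarts in $q_3$ with $\tau(q_3)=\{b\}$ and accepts on $\rightend$, which guarantees that no $d$ is left. Your symmetric pairing of $c$'s and $d$'s would serve equally well here. It works only once $a$ and $\border$ are visible in it and the $\tau(q_0)=\{c\}$ phase precedes it.
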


\begin{proof}
We are now constructing a $\dfawtl$ $M$ for $L_{cd}$. 
Intuitively, the behavior of $M$ on an input string 
$u\border v\in L_{cd}$ develops along the following phases:
\begin{enumerate}
\item Only $c$'s are translucent and only $a$ symbols are processed as long as the $\border$-symbol is read.
From the next phase on $a$ symbols are visible and $M$ rejects the input whenever an~$a$ is read. This ensures
that the input is of the form $\{a,c\}^*\border \{b,c,d\}^*$.

\item A symbol $c$ is read while $b$'s and $d$'s are translucent and matched against a $d$ symbol
while $b$'s and~$c$'s are translucent. This phase is repeated until no more $c$ is found.

\item The input is accepted if the remaining input consists of $b$ symbols only.
\end{enumerate}

To formally implement the above three phases we define a $\dfawtl$
$M= \langle Q, \{a,b,c,d,\border\}, q_0, \rightend, \tau, \delta \rangle.$
 We~let $Q=\{q_0,q_1,q_2,q_3\}$, the translucency function $\tau(q_0)=\{c\}$, $\tau(q_1)=\{b,d\}$,\ $\tau(q_2)=\{b,c\}$,\ $\tau(q_3)=\{b\}$,
and the transition function $\delta$~as being:
\begin{center}
\renewcommand{\arraystretch}{1.1}
\begin{tabular}[t]{rl}
(1) & $\delta(q_0,a)=q_0$\\
(2) & $\delta(q_0,\border)=q_1$\\
(3) & $\delta(q_1,c)=q_2$\\
\end{tabular}
\qquad\qquad
\begin{tabular}[t]{rl}
(4) & $\delta(q_2,d)=q_1$\\
(5) & $\delta(q_1,\rightend)=q_3$\\
(6) & $\delta(q_3,\rightend)=\accept$\\
\end{tabular}
\end{center}
Here, transitions (1) and (2) describe the first phase, transitions (3)--(5) describe the second phase,
and the last phase is realized by transition~(6). Thus, we can conclude that the $\dfawtl$ $M$ accepts $L_{cd}$.
\end{proof}

\begin{theorem}\label{thm:invhom1:neg}
The families $\lfam(\dfawtl)$ and $\lfam(\nfawtl)$ are not closed under inverse non-erasing homomorphism.
\end{theorem}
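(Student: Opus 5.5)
The plan is to pick a non-erasing homomorphism $h$ such that $h^{-1}(L_{cd})$ is a language with a strict counting constraint, and then show that this preimage is not accepted by any $\nfawtl$. Since $L_{cd}\in\lfam(\dfawtl)\subseteq\lfam(\nfawtl)$ by Lemma~\ref{lem:invhom1:pos}, this proves the statement for both families at once.

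\emph{Choice of the homomorphism.} Over the alphabet $\{x,y,z\}$ I take $h(x)=ac$, $h(y)=bd$ and $h(z)=\border$; it is non-erasing. Let $w\in\{x,y,z\}^*$ with $h(w)\in L_{cd}$.
\begin{itemize}
\item $h(w)$ contains exactly one $\border$, so $w$ contains exactly one $z$; write $w=w_1zw_2$.
\item No $b$ may occur before the $\border$, so $w_1\in x^*$.
\item No $a$ may occur after the $\border$, so $w_2\in y^*$.
\end{itemize}
Writing $w=x^nzy^m$, we get $|h(w)|_c=n$ and $|h(w)|_d=m$, so the counting condition forces $n=m$. Conversely, every such word lies in the preimage. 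Hence
\[
h^{-1}(L_{cd})=\{\,x^nzy^n\mid n\ge 0\,\}.
\]
The remaining task is to show that this language is not in $\lfam(\nfawtl)$.

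\emph{Non-acceptance.} I would first try to reduce to the known fact (Nagy and Otto) that $\{x^ny^n\mid n\ge 0\}$ is not accepted by any $\nfawtl$. Failing a clean reduction, I would argue directly, assuming an $\nfawtl$ $M$ accepts $x^nzy^n$ for large $n$. In the returning mode every cycle deletes exactly one symbol, namely the first symbol visible in the current state. That symbol is therefore either the leftmost remaining $x$, the $z$, or the leftmost remaining $y$ (visible only if $x$, and $z$ if still present, are translucent). So the remaining word always has the form $x^iz^{\epsilon}y^j$. The computation is then a run of a finite control on the sequence of deleted letters, where a $y$ may only be deleted while the current state sees through the remaining $x$'s (and the $z$, if present).

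\emph{Pumping step.} Consider the segment of the computation in which both $i$ and $j$ are large. Since all intermediate words have the shape $x^iz^\epsilon y^j$, whether a letter is visible depends only on whether the blocks are nonempty, not on their lengths. So as long as $i,j>0$, the behaviour of $M$ is that of an ordinary finite automaton over the deletion sequence. A state repetition among more than $|Q|$ consecutive cycles then gives a loop deleting, say, $p$ letters $x$ and $q$ letters $y$, with $p+q\ge 1$ and unchanged block-emptiness status throughout. There are two cases.
\begin{itemize}
\item If $p\neq q$ for some such loop, removing one iteration yields an accepting computation on $x^{n-p}zy^{n-q}$, which is not in the language.
\item If every loop is balanced, I would pump upward instead, or pick two loops with different ratios, to reach a word with unequal exponents.
\end{itemize}
If all loops are balanced everywhere, then along the computation the difference $i-j$ stays bounded by a constant depending only on $|Q|$. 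I would then count the phases near the ends of the computation (after one block empties, only boundedly many steps are possible before the other block must also be nearly empty) to obtain the contradiction.

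The main obstacle is this last step: rigorously ensuring that removing or adding a loop keeps every visibility check valid. The key fact is that no block becomes empty inside the loop. One also has to handle the phases where a block has already become empty, which I would treat with the same block-length-independence observation.
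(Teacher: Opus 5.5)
\textbf{Gap: the case where every loop is balanced.} Your homomorphism and your computation $h^{-1}(L_{cd})=\{x^nzy^n\mid n\ge 0\}$ coincide with the paper's. Handling both families through $\lfam(\nfawtl)$ is also fine. The gap is in proving that $\{x^nzy^n\mid n\ge 0\}$ is not in $\lfam(\nfawtl)$.

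Removing or inserting a loop inside a phase of constant block status only ever produces inputs $x^{n'}zy^{m'}$. Their exponent difference is the total imbalance of the loops used. Now take your case ``every loop is balanced'', which is exactly what an automaton alternately deleting an $x$ and a $y$ does. Then every word you can produce lies in the language:
\begin{itemize}
\item pumping upward gives nothing new;
\item there are no two loops with different ratios, since all ratios are $1$;
\item the bound on $i-j$ and the count of end phases are perfectly consistent with acceptance of $x^nzy^n$.
\end{itemize}
So no contradiction can be extracted from counting alone. The step you flag as the main obstacle, validity of the loop surgery, is actually the easy part. Your observation that visibility depends only on which blocks are nonempty already settles it.

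\textbf{Missing idea: order-blindness.} A returning automaton with translucent letters cannot see the order of the letters. Suppose $M$ accepts $w$. Let $w'$ be the word formed by the letters in the order $M$ consumes them, followed by the unread remainder. Then $M$ also accepts $w'$, because each consumed letter was visible in the current state, and each endmarker test depends only on which letters remain.

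This is the Nagy--Otto theorem the paper invokes: every language in $\lfam(\nfawtl)$ contains a letter-equivalent regular sublanguage. Each Parikh vector $(n,1,n)$ is realized by a unique word of $\{x^nzy^n\mid n\ge 0\}$. So such a sublanguage would be the whole language, which is not regular. This is also why $\{x^ny^n\}$ is not in $\lfam(\nfawtl)$, so no reduction is needed.

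\textbf{How to complete your direct argument.} Add this rearrangement step to your pumping.
\begin{itemize}
\item If $w'\neq x^nzy^n$, then $w'$ is accepted but is not in the language.
\item If $w'=x^nzy^n$, then no $z$ or $y$ is consumed before all $x$'s are. Varying the number of $x$'s then preserves acceptance: use your loop removal on the long $x$-only phase, or simply prepend an $x$ if some $x$'s are still unread at acceptance.
\end{itemize}
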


\begin{proof}
We consider language $L_{cd}$ that belongs to $\lfam(\dfawtl)$ due to Lemma~\ref{lem:invhom1:pos}
and define a homomorphism $h\colon \{a,b,\border\}^* \to \{a,b,c,d,\border\}^*$ such that $h(a)=ac$, $h(b)=bd$, and $h(\border)=\border$.
Let us first assume that $\lfam(\dfawtl)$ is closed under inverse homomorphism.
Then, the inverse homomorphic image $h^{-1}(L_{cd})=\{\, a^n\border b^m \mid n = m \,\}$ belongs to $\lfam(\dfawtl)$ as well.
It is shown in~\cite{nagy:2012:oncdsosdrawwso} that every language in $\lfam(\nfawtl)$ and, hence, also every language in $\lfam(\dfawtl)$
contains a letter-equivalent regular sublanguage.
However, language $\{\, a^n\border b^m \mid n = m \,\}$ does not contain any letter-equivalent regular sublanguage
which can be shown by a simple application of the pumping lemma for regular languages. This is a contradiction and
gives the non-closure of $\lfam(\dfawtl)$ under inverse non-erasing homomorphism. The proof for the family $\lfam(\nfawtl)$
can be given analogously.
\end{proof}

Next, we investigate the non-returning variants and show the non-closure under inverse non-erasing homomorphism as well.
We start with presenting a language $L_{ac}$ that is accepted by an $\nrdfawtl$ and will be subsequently used to obtain the non-closure result.
Let the language $L_{ac}\subseteq\{a, c, \border\}^*$ be 
defined as
$$
L_{ac} = \{\, u\border v \mid u,v \in \{a,c\}^*,\, |u|_a=|v|_c \,\}.
$$ 

\begin{lemma}\label{lem:invhom2:pos}
Language $L_{ac}$ belongs to $\lfam(\nrdfawtl)$.
\end{lemma}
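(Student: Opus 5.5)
The plan is to give an explicit $\nrdfawtl$ that first normalizes the input in one sweep and then matches the remaining symbols pairwise. The main obstacle is the $\border$. Any visible symbol, $\border$ included, is consumed when it is read, so the boundary between $u$ and $v$ cannot survive for later sweeps. To get around this, the first sweep destroys the information we no longer need. It deletes all $c$'s of $u$, all $a$'s of $v$, and the $\border$ itself. The remaining input is then exactly $a^{|u|_a}c^{|v|_c}$, because relative order is preserved. The first sweep is also where the format (exactly one $\border$) gets checked.

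Concretely, I would take $M=\langle Q,\{a,c,\border\},q_0,\rightend,\tau,\delta\rangle$ with $Q=\{q_0,q_1,p_0,p_1,p_2\}$. The translucency is $\tau(q_0)=\{a\}$, $\tau(q_1)=\{c\}$, $\tau(p_0)=\emptyset$, $\tau(p_1)=\{a\}$ and $\tau(p_2)=\{a,c\}$. The transitions are
$\delta(q_0,c)=q_0$, $\delta(q_0,\border)=q_1$, $\delta(q_1,a)=q_1$, $\delta(q_1,\rightend)=p_0$,
$\delta(p_0,a)=p_1$, $\delta(p_1,c)=p_2$, $\delta(p_2,\rightend)=p_0$, and $\delta(p_0,\rightend)=\accept$.

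For phase one, note that in $q_0$ the $a$'s are invisible. So $M$ consumes every $c$ of $u$ and stops at the first $\border$, entering $q_1$. If no $\border$ exists, $M$ sees $\rightend$ in $q_0$ and rejects. In $q_1$ the $c$'s are invisible, so $M$ consumes every $a$ of $v$. If a second $\border$ appears, $\delta(q_1,\border)$ is undefined and $M$ rejects. Otherwise $M$ reaches $\rightend$ and returns in state $p_0$. Since $M$ is non-returning, the head keeps moving right between these reads, so all of this happens in a single sweep. I would verify that after it the remaining tape is $a^{|u|_a}c^{|v|_c}$ precisely when the input has the form $u\border v$ with $u,v\in\{a,c\}^*$.

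For phase two, each sweep starts in $p_0$, where everything is visible. It reads the leftmost $a$, or rejects if the leftmost symbol is $c$. Then, with $a$ translucent, it reads the leftmost $c$ in $p_1$, and in $p_2$ it skips everything up to the endmarker and returns. Thus each sweep removes exactly one $a$ and one $c$ while preserving the shape $a^ic^j$. If the $a$'s run out before the $c$'s, $p_0$ sees a $c$ and rejects. If the $c$'s run out first, $p_1$ sees $\rightend$ and rejects. $M$ accepts exactly when $p_0$ sees the endmarker at the start of a sweep, which happens iff $i=j$, that is, iff $|u|_a=|v|_c$. The edge cases $u=\lambda$, $v=\lambda$ and $\border$ alone I would check directly; for instance, input $\border$ leads to $p_0$ with empty tape and acceptance.
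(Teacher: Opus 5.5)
Your construction is correct and is essentially the paper's own automaton up to renaming ($p_0,p_1,p_2$ play the roles of the paper's $q_2,q_3,q_4$). It consists of one normalizing sweep that consumes the $c$'s of $u$, the $\border$, and the $a$'s of $v$, leaving $a^{|u|_a}c^{|v|_c}$, followed by sweeps that each match one $a$ against one $c$. The only deviation is that you set $\tau(p_2)=\{a,c\}$ where the paper uses $\{c\}$; this is harmless, since at that point only $c$'s lie to the right of the head.
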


\begin{proof}
We construct the $\nrdfawtl$
$M= \langle Q, \{a,c,\border\}, q_0, \rightend, \tau, \delta \rangle$ as follows.
Let $Q=\{q_0,q_1,q_2,q_3,q_4\}$, the translucency function $\tau(q_0)=\{a\}$, $\tau(q_1)=\{c\}$,\ $\tau(q_2)=\emptyset$, $\tau(q_3)=\{a\}$, $\tau(q_4)=\{c\}$,
and the transition function $\delta$~as being:
\begin{center}
\renewcommand{\arraystretch}{1.1}
\begin{tabular}[t]{rl}
(1) & $\delta(q_0,c)=q_0$\\
(2) & $\delta(q_0,\border)=q_1$\\
(3) & $\delta(q_1,a)=q_1$\\
(4) & $\delta(q_1,\rightend)=q_2$\\
\end{tabular}
\qquad\qquad
\begin{tabular}[t]{rl}
(5) & $\delta(q_2,a)=q_3$\\
(6) & $\delta(q_3,c)=q_4$\\
(7) & $\delta(q_4,\rightend)=q_2$\\
(8) & $\delta(q_2,\rightend)=\accept$\\
\end{tabular}
\end{center}

The automaton basically works in two phases.
In the first phase, it makes one sweep from left to right reading all $c$'s to the left of the $\border$-symbol and all $a$'s to the right of
the $\border$-symbol by using transitions~(1) to~(4). After this sweep the remaining input is of the form $a^*c^*$.
The second phase consists of iterated sweeps, whereby in every sweep one symbol~$a$ is matched against one symbol~$c$ by using transitions~(5) to~(7).
If all $a$'s are matched against all $c$'s and no input symbols are left, the automaton accepts by using transition~(8).
Thus, we can conclude that the $\nrdfawtl$ $M$ accepts $L_{ac}$.
\end{proof}
 
\begin{theorem}\label{lem:invhom2:neg}
The families $\lfam(\nrdfawtl)$ and $\lfam(\nrnfawtl)$ are not closed under inverse non-erasing homomorphism.
\end{theorem}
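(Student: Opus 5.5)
We use $L_{ac}$ from Lemma~\ref{lem:invhom2:pos}, which is in $\lfam(\nrdfawtl)$, and a non-erasing homomorphism whose inverse image of $L_{ac}$ is a language already known not to be in $\lfam(\nrnfawtl)$. Since $\lfam(\nrdfawtl)\subseteq\lfam(\nrnfawtl)$, one witness refutes closure for both families at once.

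The natural choice is $h\colon\{a,b,\border\}^*\to\{a,c,\border\}^*$ with $h(a)=a$, $h(b)=c$, $h(\border)=\border$; more generally, any non-erasing $h$ mapping $a,b$ into $\{a,c\}^+$ may be used. A word $u\border v$ over $\{a,b\}$ lies in $h^{-1}(L_{ac})$ if and only if $|u|_a=|v|_b$. This language does not look obviously hard for $\nrnfawtl$, so we intersect with a regular language. If $\lfam(\nrnfawtl)$ were closed under inverse homomorphism, then $h^{-1}(L_{ac})\cap a^*\border b^*=\{a^n\border b^n\mid n\ge 0\}$ would belong to it, provided the family is closed under intersection with regular languages. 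It is not, as recalled before Corollary~\ref{cor:nrnfawtl-not-closed-compl}, so this route fails. Moreover, $\{a^n\border b^n\}$ is probably accepted anyway, by matching one $a$ with one $b$ per sweep. A harder target is therefore needed.

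The better plan is to choose $h$ so that $h^{-1}(L_{ac})$ directly encodes $L_{\subtext{dab}}=L_{\subtext{eq}}\cap\{aa,bb\}^*$, known from~\cite{kutrib:2026:twfawtl} not to be accepted by any $\nrnfawtl$. The idea is to force the pairing $aa$/$bb$ by letting $h$ produce compound blocks. We would use an alphabet with letters $x$ (standing for $aa$) and $y$ (standing for $bb$), possibly together with a separator. Then we would look for images $h(x),h(y)\in\{a,c,\border\}^+$ such that preimages of words in $L_{ac}$ are exactly words in $\{x,y\}^*$ with equally many $x$ and $y$, in some rigid format.

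A single $\border$ in $L_{ac}$ makes $u$ and $v$ separate regions, which limits this. So the alternative we would actually pursue is a direct pumping-style impossibility proof for $h^{-1}(L_{ac})$ intersected with nothing, in the style of Lemma~\ref{lem:lc-not-nrnfawtl}. We take $h(a)=ac$ and $h(b)=\border$ is not allowed, since there is only one $\border$. So we instead take $h(a)=a$, $h(b)=c$, $h(\#)=\border$, and consider inputs $u\border v$ where $u$ and $v$ are mixed words with $|u|_a=|v|_b$. We then run the case analysis on jump behaviour for inputs such as $a^n b^n\border a^n b^n$, where reordering the mixed blocks can change counts without changing local views.

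The main obstacle is exactly this: identifying a homomorphic preimage of $L_{ac}$ that is provably outside $\lfam(\nrnfawtl)$ and carrying out the sweep/jump case analysis for it. I would first try to adapt the proof of Lemma~\ref{lem:lc-not-nrnfawtl} to inputs of the form $b^n a^n\border a^n b^n$, exploiting that an $a$ on the left and an $a$ on the right play different roles but look identical to $M$ once jumps occur.
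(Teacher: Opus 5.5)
There is a genuine gap: you never exhibit a working witness, and the one you finally commit to cannot work. With $h(a)=a$, $h(b)=c$, $h(\border)=\border$ the homomorphism is a bijective renaming of letters. So $h^{-1}(L_{ac})=\{\,u\border v\mid u,v\in\{a,b\}^*,\ |u|_a=|v|_b\,\}$ is just $L_{ac}$ with $c$ renamed to $b$. Renaming the letters of the automaton in Lemma~\ref{lem:invhom2:pos} shows that it belongs to $\lfam(\nrdfawtl)$. Hence the planned case analysis on inputs like $b^na^n\border a^nb^n$ (whose image $c^na^n\border a^nc^n$ lies in $L_{ac}$) can never reach a contradiction. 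The $\{aa,bb\}^*$-encoding idea is left without any candidate. What survives is only the correct reduction that one witness outside $\lfam(\nrnfawtl)$ refutes closure for both families, plus your right observation that going through $\{a^n\border b^n\}$ and intersection with a regular set is useless.

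The missing idea is to send a \emph{single} letter to the block $ac$: over $\{b,\border\}$ take $h(b)=ac$ and $h(\border)=\border$, so that $h^{-1}(L_{ac})=\{\,b^n\border b^n\mid n\ge 0\,\}$. Now the same letter occurs on both sides of $\border$, which is exactly what your intuition ``identical once jumps occur'' needs.

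As long as the prefix block is not exhausted, $M$ only consumes $b$'s from its left end, and every jump lands on $\border$ or on the endmarker. Pumping on the states after the reads shows that, for large $n$, it consumes some $k<n$ of these $b$'s and then reads $\border$. By pumping again, it then consumes some $m<n$ $b$'s of the suffix and jumps to the endmarker. At that point it either accepts or returns to $(p\,b^{2n-k-m}\rightend)$, a configuration that no longer records where $\border$ was.

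The identical steps are possible on $b^{n+1}\border b^{n-1}$ and lead to the same configuration, so this word would be accepted, a contradiction. This is the paper's proof, and since $L_{ac}\in\lfam(\nrdfawtl)\subseteq\lfam(\nrnfawtl)$ it covers both families.
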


\begin{proof}
We consider language $L_{ac}$ that belongs to $\lfam(\nrdfawtl)$ due to Lemma~\ref{lem:invhom2:pos}
and define a non-erasing homomorphism $h\colon \{b,\border\}^* \to \{a,c,\border\}^*$ such that $h(b)=ac$ and $h(\border)=\border$.
To show the theorem it remains for us to show that the inverse homomorphic image
$h^{-1}(L_{ac})=\{\, b^n\border b^n \mid n \ge 1 \,\}$ does not belong to $\lfam(\nrnfawtl)$.

\begin{sloppypar}
By way of contradiction, let us assume that $\{\, b^n\border b^n \mid n \ge 1 \,\}$ is accepted by some
$\nrnfawtl$ \mbox{$M=\langle Q,\Sigma, q_0, \rightend, \tau,\delta\rangle$.}
We consider an accepting computation of~$M$ on input $b^n \border b^n$ for $n$ being large enough.
\end{sloppypar}

Suppose that $M$ reads more than $|Q|$ many symbols $b$, say $n_1$, from the
prefix of the input. Then, during the first $n_1$ steps one state, say $p_1$,
appears at least twice:
$$
(q_0 b^n\border b^n \rightend)
\vdash^* 
(p_1 b^{n-i_1} \border b^n \rightend)
\vdash^+
(p_1 b^{n-i_1-i_2} \border b^n \rightend),
$$
where $i_1,i_2\leq |Q|$.
Then, the computation 
$$
(q_0 b^{n-i_2}\border b^n \rightend)
\vdash^* 
(p_1 b^{n-i_2-i_1} \border b^n \rightend)
$$
continues accepting as well, a contradiction. So, $M$ jumps after at most
$|Q|$ steps on the prefix. If it jumps to the endmarker, the effect is a state
change only. Hence, the only remaining possibility is that $M$ jumps where the symbol~$\border$ 
is not translucent. We obtain

$$
(q_0 b^n\border b^n \rightend)
\vdash^* 
(q_1 b^{n-x_1} \border b^n \rightend)
\vdash
(b^{n-x_1} q_2 b^n \rightend).
$$
Now, $M$ may jump to the endmarker or may read $b$'s. In the latter case,
we can repeat the argument from above. If $M$ reads more than $|Q|$
consecutive symbols~$b$ some state, say $p_2$,  appears twice:
\begin{align*}
(q_0 b^n\border b^n \rightend)
&\vdash^+
(b^{n-x_1} q_2 b^n \rightend)\\
&\vdash^*
(b^{n-x_1} p_2 b^{n-j_1} \rightend)\\
&\vdash^+
(b^{n-x_1} p_2 b^{n-j_1-j_2} \rightend),
\end{align*}
where $1\leq j_1,j_2\leq |Q|$.
Then, the computation 
$$
(q_0 b^n\border b^{n-j_2} \rightend)
\vdash^*
(b^{n-x_1} p_2 b^{n-j_1-j_2} \rightend)
$$
continues accepting as well, a contradiction. So, $M$ jumps after at most
$|Q|$ steps on the second factor~$b^n$ to the endmarker.
Thus, we end up in a configuration $(q_3b^{n-x_1}b^{n-x_2} \rightend)$ from which an accepting configuration can be
reached.
But then the computation
$$
(q_0 b^{n+1}\border b^{n-1} \rightend)
\vdash^* 
(q_3 b^{n+1-x_1} b^{n-1-x_2} \rightend)=(q_3 b^{n-x_1} b^{n-x_2} \rightend)
$$
continues accepting as well which is a contradiction.

Thus, $\{\, b^n\border b^n \mid n \ge 1 \,\}$ does not belong to $\lfam(\nrnfawtl)$ which concludes the proof.
\end{proof}

For the sake of completeness and reader's ease of mind, we summarize in the following table
the scenario of closure properties for the language families
$\lfam(\dfawtl)$, $\lfam(\nrdfawtl)$, $\lfam(\nfawtl)$, and $\lfam(\nrnfawtl)$. The question
of whether the family $\lfam(\nfawtl)$ is closed under reversal remains open.

\begin{table}[!ht]
\begin{center}
\renewcommand{\arraystretch}{1.2}\setlength{\tabcolsep}{4.5pt}
\begin{tabular}{|c|c!{\color{black}\vrule width .5pt}c!{\color{black}\vrule width .5pt}c!{\color{black}\vrule width .5pt}c|c|c|c|c|c|}
\hline
\rowcolor[HTML]{EFEFEF} 
  Family & {$\overline{\phantom{aa}}$} & {$\cup$} & ${\cap}$ &  
           $\cap_\subtext{reg}$ & $\cdot$ & $*$ & $h_{\lambda}$ &$h_{\lambda}^{-1}$ & $R$\\
\hline\hline
\cellcolor[HTML]{EFEFEF}$\lfam(\dfawtl)$   
               &  \tcyes~\cite{mraz:2024:rfawtl:proc} 
               & \tcno~\cite{mraz:2024:rfawtl:proc}  
               & \tcno~\cite{mraz:2024:rfawtl:proc}  
               & \tcno~\cite{mraz:2024:rfawtl:proc} 
               & \tcno~\cite{mraz:2024:rfawtl:proc}
               & \tcno~\cite{mraz:2024:rfawtl:proc}
               & \tcno~\cite{mraz:2024:rfawtl:proc} 
               & \cno
               & \tcno~\cite{mraz:2024:rfawtl:proc}\\
\cellcolor[HTML]{EFEFEF}$\lfam(\nrdfawtl)$   
              & \tcyes~\cite{mraz:2023:nrfawtl}  
              & \tcno~\cite{mraz:2023:nrfawtl}  
              & \tcno~\cite{mraz:2023:nrfawtl}  
              & \tcno~\cite{kutrib:2025:twfawtil:proc}
              & \cno\ (\copen~\cite{mraz:2023:nrfawtl}) 
              & \cno\ (\copen~\cite{mraz:2023:nrfawtl}) 
              & \tcno~\cite{mraz:2023:nrfawtl}   
              & \cno\ (\copen~\cite{mraz:2023:nrfawtl}) 
              & \cno\\
\cellcolor[HTML]{EFEFEF}$\lfam(\nfawtl)$   
              & \tcno~\cite{nagy:2012:oncdsosdrawwso} 
              & \tcyes~\cite{nagy:2012:oncdsosdrawwso} 
              & \tcno~\cite{nagy:2012:oncdsosdrawwso}  
              & \tcno~\cite{nagy:2012:oncdsosdrawwso} 
              & \tcyes~\cite{nagy:2012:oncdsosdrawwso}
              & \tcyes~\cite{nagy:2012:oncdsosdrawwso} 
              & \tcno~\cite{nagy:2012:oncdsosdrawwso} 
              & \cno\ (\copen~\cite{nagy:2012:oncdsosdrawwso})
              & \copen~\cite{nagy:2012:oncdsosdrawwso}\\
\cellcolor[HTML]{EFEFEF}$\lfam(\nrnfawtl)$   
             & \cno
             & \tcyes~\cite{mraz:2023:nrfawtl} 
             & \tcno~\cite{kutrib:2025:twfawtil:proc}
             & \tcno~\cite{kutrib:2025:twfawtil:proc}
             & \cno\ (\copen~\cite{mraz:2023:nrfawtl}) 
             & \cno\ (\copen~\cite{mraz:2023:nrfawtl})
             & \tcno~\cite{mraz:2023:nrfawtl} 
             & \cno\ (\copen~\cite{mraz:2023:nrfawtl}) 
             & \cno\\ 
\hline
\end{tabular}
\end{center}
\caption{A summary of closure properties for the language families
  $\lfam(\dfawtl)$, $\lfam(\nrdfawtl)$, $\lfam(\nfawtl)$, and
  $\lfam(\nrnfawtl)$. Non-erasing homomorphisms are denoted $h_{\lambda}$ and
  $R$ means reversal.
Bold properties are shown in the present paper.
Shaded properties are proved \mbox{in \cite{kutrib:2025:twfawtil:proc}, \cite{mraz:2023:nrfawtl}, \cite{mraz:2024:rfawtl:proc}, 
and ~\cite{nagy:2012:oncdsosdrawwso}.}}
\label{tab:closure}
\end{table}

\section{The emptiness Problem}\label{sec:open-decide}

To decide the emptiness problem for non-returning $\dfa$s and $\nfa$s with translucent letters,
we make use of iterated finite-state transducers. The idea is to
simulate computations of translucent automata by such transducers.
In particular, we construct an iterated transducer that mimics the
behavior of a translucent $\dfa$ or $\nfa$ performing a bounded
number of sweeps. This approach allows us to exploit known results
on iterated transducers. In particular, it has been shown that
iterated length-preserving transducers with a bounded number of
sweeps accept only regular languages~\cite{kutrib:2022:cadponiufst}. By reducing
the behavior of translucent automata to such devices, we obtain a
regular representation of their computations, which will later
enable us to decide the emptiness problem.

  An \emph{iterated uniform finite-state transducer} is a finite-state
  transducer that processes its input in multiple sweeps. In the first pass, the machine reads the input word
  followed by an endmarker and produces an output word. In each
  subsequent pass, it reads the output generated in the previous pass
  and produces a new output word. Thus, the output of one pass serves
  as the input to the next pass. In~\cite{kutrib:2022:cadponiufst}
  length-preserving iterated uniform finite-state transducers are considered that are also known as Mealy
  machines~\cite{Mealy:1955:amfssc}. It has been shown that, if the number of
  sweeps of these devices is limited by a constant, then the language accepted is
regular~\cite{kutrib:2022:cadponiufst}.

  Formally, we define a \emph{nondeterministic iterated uniform
  finite-state transducer} ($\ntrans$) as a system
  $T=\langle Q,\Sigma,\Delta,q_0,\rightend,\delta,F\rangle$,
  where:
  \begin{itemize}
  \item $Q$ is the finite set of \emph{internal states},
  \item $\Sigma$ is the set of \emph{input symbols},
  \item $\Delta$ is the set of \emph{output symbols},
  \item $q_0\in Q$ is the initial state,
  \item $\rightend\in\Delta\setminus \Sigma$ is the \emph{endmarker},
  \item $F\subseteq Q$ is the set of \emph{accepting states},
  \item $\delta\colon Q\times(\Sigma\cup\Delta)\to 2^{Q\times \Delta}$
  is the partial \emph{transition function}.
  \end{itemize}

  The $\ntrans$ $T$ \emph{halts} whenever the transition function is
  undefined or when it enters an accepting state at the end of a sweep.
  Since the transducer is applied in multiple passes, that is, in every
  pass except the first it processes the output of the previous pass,
  the transition function depends on symbols from $\Sigma\cup\Delta$.
  For $w\in(\Sigma\cup\Delta)^*$ we denote by $T(w)$ the set of outputs
  that $T$ may produce in a complete sweep on input $w$. During a
  computation on input $w\in\Sigma^*$, the $\ntrans$ $T$ produces a
  sequence of words
  $w_1,\ldots,w_i,w_{i+1},\ldots\in(\Sigma\cup\Delta)^*$ such that
  $w_1\in T(w\rightend)$ and $w_{i+1}\in T(w_i)$ for all $i\geq 1$.

  An $\ntrans$ is said to be \emph{deterministic} if
  $|\delta(p,x)|\leq 1$ for all $p\in Q$ and $x\in\Sigma\cup\Delta$.
  In this case we write $\delta(p,x)=(q,y)$ instead of
  $\delta(p,x)=\{(q,y)\}$ and treat the transition function as a
  mapping $\delta\colon Q\times(\Sigma\cup\Delta)\to Q\times \Delta$.

  A computation is halting if there exists $r\geq 1$ such that
  $T$ halts on $w_r$, thus performing $r$ sweeps. An input word
  $w\in\Sigma^*$ is \emph{accepted} by $T$ if at least one computation
  on $w$ halts at the end of a sweep in an accepting state. That is,
  the initial input consists of a word over $\Sigma$ followed by the
  endmarker, and after $r-1$ sweeps an output is produced that drives
  $T$ through a final complete sweep where it halts in an accepting
  state. The output of the last sweep is not used. The language
  accepted by $T$ is the set
  $L(T)=\{\, w \in \Sigma^* \mid w \text{ is accepted by }T\,\}$.

  In the following, we construct, for every $\nrdfawtl$
  (resp.\ $\nrnfawtl$)
  $M=\langle Q,\Sigma,q_0,\rightend,\tau,\delta\rangle$
  that performs at most $k$ sweeps, a corresponding $\ntrans$
  $M'=\langle Q',\Sigma,\Delta,q_0',\rightend,\delta',F'\rangle$
  with $L(M)=L(M')$.

  \begin{proposition}\label{prop:it-trans}
  Let $k\geq 1$ be a positive integer. Given an $\nrnfawtl$ $M=\langle
  Q,\Sigma, q_0, \rightend, \tau,\delta\rangle$ that makes at most $k$ sweeps,
  then there is an $\ntrans$ 
  $T=\langle Q',\Sigma,\Delta,q_0,\rightend,\delta',F\rangle$ $M'$ that makes
  at most $k$ sweeps such $L(M)=L(T)$.
  \end{proposition}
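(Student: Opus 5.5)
The plan is to let each sweep of $M$ correspond to exactly one sweep of the $\ntrans$ $T$. The transducer works on a length-preserving encoding of the tape in which consumed symbols are marked rather than deleted. Take the output alphabet $\Delta=\Sigma\cup\{\bar{x}\mid x\in\Sigma\}\cup(Q\times\{\rightend\})\cup\{\rightend\}$. A barred letter $\bar{x}$ marks a position already read by $M$. The endmarker position stores the state in which $M$ begins its next sweep, so the transducer stays length-preserving, in the style of Mealy machines as in~\cite{kutrib:2022:cadponiufst}. With this encoding, every intermediate tape $w_i$ equals the original input with some positions barred, followed by an endmarker carrying a state of $M$.

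In a single sweep, $T$ simulates one sweep of $M$ on the unbarred letters, in its finite control.
\begin{itemize}
\item At the start of the sweep, $T$ must know the current state $p$ of $M$, but this is written at the right end. So $T$ nondeterministically guesses $p$ at the left end and keeps it in its state.
\item It then scans left to right while holding the current simulated state $q$ of $M$ and copying barred letters unchanged, since removed symbols are invisible to $M$.
\item On an unbarred letter $x$ with $x\in\tau(q)$, it copies $x$ and keeps $q$, which simulates skipping a translucent letter.
\item On an unbarred letter $x\notin\tau(q)$, it guesses $q'\in\delta(q,x)$, outputs $\bar{x}$ and continues in $q'$. If $\delta(q,x)$ is undefined, the transition is undefined and $T$ halts without accepting.
\item At the endmarker, $T$ first checks that the guessed $p$ agrees with the state recorded there. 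In the first sweep the endmarker is the plain symbol $\rightend$, and the recorded state is taken to be $q_0$. This check is also why a state has to be written at the endmarker at all.
\item Next it applies $\delta(q,\rightend)$. If this is $\accept$, $T$ enters an accepting state in $F$. If it is a set of states, $T$ guesses $p'$ in it and writes $(p',\rightend)$. If it is undefined, $T$ halts and rejects.
\end{itemize}
The case $v\in\tau(q)^*$ in $M$ corresponds exactly to reaching the endmarker with all unbarred letters seen being translucent, so no special handling is needed. Since $M$ starts each sweep at the left end of the remaining input, the simulation is faithful.

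Correctness then follows by induction on the number of sweeps, and I expect this step to be the main obstacle. The invariant is that the computation of $M$ on $w$ reaches configuration $(p\,u\rightend)$ at the beginning of its $i$th sweep if and only if some computation of $T$ yields after $i-1$ sweeps a word $w_{i-1}$ whose unbarred letters spell $u$ and whose endmarker holds $p$. The number of sweeps of $T$ equals that of $M$, so it is at most $k$.

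Some care is needed at the boundaries. Wrong guesses of $p$ must only cause a dead halt and never acceptance. An accepting halt of $M$ in the middle of a sweep cannot happen, because acceptance occurs only at the endmarker, which fits $T$'s requirement that acceptance happens in an accepting state at the end of a sweep. Finally, $M$ never reaches the endmarker with visible letters remaining, since it must process the first visible letter. With these checks, $L(M)=L(T)$. If $M$ is deterministic, the only nondeterminism left is the guess of $p$; this does no harm, since the proposition only asserts an $\ntrans$.
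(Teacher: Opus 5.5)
Your proposal is correct, and its core is the same simulation as the paper's. One sweep of $T$ simulates one sweep of $M$. Letters consumed by $M$ are overwritten by a marker; the paper uses the blank $\blank$ and you use barred letters, which is equivalent. Translucent letters are copied, and marked cells are ignored in later sweeps.

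The genuine difference is how the state of $M$ is passed from one sweep to the next.

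\emph{The paper's route.} It uses sweep-indexed copies $q_1,\dots,q_k$ and the transition $\delta'(q_i,\rightend)\ni(q'_{i+1},\rightend)$. This tacitly lets the control of $T$ continue in $q'_{i+1}$ at the start of the next sweep. What this buys: the index enforces the bound of $k$ sweeps syntactically, and no nondeterminism beyond that of $M$ is added.

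\emph{Your route.} You write the next state into the endmarker cell and guess-and-verify it in the following sweep. This costs extra nondeterminism, which is harmless since only an $\ntrans$ is required. It also means the sweep bound has to come from $M$ rather than from a counter. Your remark that a wrong guess dies within a sweep that $M$ actually reaches covers this, because every preceding sweep was verified against a real computation of $M$.

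In exchange, your version is faithful to the uniform model as defined in the paper. There $w_{i+1}\in T(w_i)$ and $T(w)$ depends only on $w$, so every sweep is a complete transduction starting in the initial state. Under that reading, storing the state on the tape is exactly what the simulation needs, and hence what the appeal to the regularity result for bounded-sweep iterated transducers needs as well.
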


  \begin{proof}
  By definition, $M$ accepts only at the end of a sweep. 
  To simulate the at most $k$ sweeps of $M$, the machine $T$ uses
  $k$ copies of every state $q\in Q$, denoted by $q_1,q_2,\dots,q_k$,
  where $q_i$ represents state $q$ during the $i$th sweep.
 So, $Q'=\{\ q_i \mid q\in Q \text{ and } 1\leq i\leq k\,\} \cup\{s_+\}$. We set
 $\Delta=\Sigma\cup\{\rightend,\blank\}$ and $F=\{s_+\}$.
 For all $1\leq i\leq k$, $q\in Q$, the transition function $\delta'$ is defined by:
  \begin{align*}
  \delta'(q_i,x)&\ni (q'_i,\blank)
  &&\text{if } \delta(q,x) \ni q',\ x\in \Sigma \setminus \tau(q),\\
  \delta'(q_i,y) &\ni (q_i,y)
  &&\text{if } y\in\tau(q),\\
  \delta'(q_i,\blank) &\ni (q_i,\blank)
  &&\\
  \delta'(q_i,\rightend) &\ni (q'_{i+1},\rightend)
  &&\text{if } \delta(q,\rightend)\ni q'.
  \end{align*}
  Transducer $M'$ accepts exactly when $M$ accepts:
  \begin{align*}
  \delta'(q_i,\rightend ) &\ni (s_+,\rightend)
  &&\text{if } \delta(q,\rightend) \ni \accept.\\
  \end{align*}
  Consider the computation
  \[
  (w_0qw_1xw_2\rightend)
  \vdash (w_0w_1q'w_2\rightend) \vdash^+ \accept
  \]
  of $M$ with $\delta (q,x)=q'$ and $w_1\in (\tau (q))^*$. Then 
  $T$ can perform the transduction
  \[
  (w_0'qw_1'xw_2'\rightend)
  \vdash (w_0'w_1'\blank q'w_2'\rightend) \vdash^+ (w_0'w_1'\blank
    w_2'\rightend s_+)
  \]
  where $w_0', w_1', w_2'$ are the words $w_0, w_1, w_2$ where
  each letter read by $M$ in predecessor steps is replaced by a blank symbol ($\blank$). 
  Since all translucent letters are
  read and emitted by $T$ and the blank symbols are ignored during the
  subsequent sweeps, $M$ and $T$ accept the same languages $L(T)=L(M)$.
  \end{proof}

  Since iterated finite-state transducers with a constant number of
  sweeps accept only regular languages~\cite{kutrib:2022:cadponiufst}, the
  same holds for 
$\nrdfawtl$ and $\nrnfawtl$ with a constant number of sweeps.

  \begin{theorem}\label{theo:empty-dec}
  Given an $\nrdfawtl$ ($\nrnfawtl$) $M=\langle Q,\Sigma, q_0, \rightend, \tau,\delta\rangle$, it is decidable whether the language accepted, $L(M)$, is empty.
  \end{theorem}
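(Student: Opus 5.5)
We reduce emptiness of $L(M)$ to emptiness of a bounded-sweep sublanguage. The idea is a short-witness property: if $L(M)\neq\emptyset$, then $M$ accepts some word within a number of sweeps bounded by a computable function $k$ of $|Q|$ and $|\Sigma|$. Given such a $k$, we proceed as follows.

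First, we build from $M$ an $\nrnfawtl$ $M_k$ that simulates $M$ while counting its sweeps in the finite control. $M_k$ blocks, i.e.\ leaves $\delta$ undefined, as soon as a $(k+1)$st sweep would start. Then $L(M_k)=\{\,w\mid M \text{ accepts } w \text{ within } k \text{ sweeps}\,\}$. Second, we apply Proposition~\ref{prop:it-trans} to obtain an $\ntrans$ with at most $k$ sweeps for $L(M_k)$. Third, we invoke the effective regularity result of~\cite{kutrib:2022:cadponiufst} to get a finite automaton for $L(M_k)$ and decide its emptiness. For $\nrdfawtl$ this is a special case, since every $\nrdfawtl$ is an $\nrnfawtl$.

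The main obstacle is the short-witness bound. A sweep that reads no symbol changes only the state, so within any run at most $|Q|$ consecutive such sweeps can occur before a state repeats. Repeated empty-sweep cycles can therefore be cut out, and every sweep after that reads at least one letter. This bounds the number of sweeps by roughly $(|w|+1)\cdot|Q|$, which still depends on $|w|$. It therefore remains to show that a shortest accepted word admits an accepting run with a bounded number of nonempty sweeps.

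Our first attempt at this is a pumping argument in the spirit of Lemma~\ref{lem:lc-not-nrnfawtl}. Consider a shortest accepted word $w$ and an accepting run on it. We record for each position the sweep index in which it is consumed and the state in which it is read. If the number of sweeps were large, we would look for a factor of $w$ whose deletion preserves acceptance, contradicting minimality. To find such a factor, we would associate with each cut point of $w$ the vector of states that the run has when crossing that cut in each sweep. If two cut points carry identical crossing vectors, deleting the factor between them should leave an accepting run. Two details must be checked here: the returns on the endmarker remain consistent, and translucency is determined only by the current state, so the deletion does not change what is visible.

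This crossing-sequence argument bounds $|w|$ in terms of the number of sweeps, which is the wrong direction. The bound on sweeps must come from the cutting of empty sweeps combined with that length bound, which makes the argument circular. As a fallback, we would instead simulate $M$ with crossing sequences treated as a nondeterministic guess. We would then try to show that the set of accepted words with pairwise distinct crossing vectors is finite, which yields a finite search and hence decidability.
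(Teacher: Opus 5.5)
\textbf{Gap.} Your pipeline after the bound is exactly the paper's: count sweeps in the finite control, apply Proposition~\ref{prop:it-trans}, and invoke the effective regularity of bounded-sweep $\ntrans$ from~\cite{kutrib:2022:cadponiufst}. The problem is that the short-witness bound on the number of sweeps, which you correctly single out as the main obstacle, is never established.

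Your crossing-vector argument bounds $|w|$ in terms of the number of sweeps. Combined with the bound $(|w|+1)\cdot|Q|$ it is circular, as you note yourself. The fallback does not repair this. When the number of sweeps is unbounded, crossing vectors have unbounded length, so there are infinitely many of them. No pigeonhole argument then makes the set of words with pairwise distinct crossing vectors finite. As it stands, no computable $k$ is produced, and the reduction to $M_k$ does not go through.

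\textbf{Missing idea.} Apply the pigeonhole principle to the states in which sweeps begin, not to cut points of the word. At the start of a sweep the head is at the left end, so the configuration is $(s\,v\rightend)$. It is determined by $s$ and the remaining word $v$, which is $w$ with the already consumed letters removed.

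Take a shortest $w\in L(M)$ and an accepting run on it with the fewest sweeps. Suppose two sweeps start in the same state $s$, first in $(s\,v\rightend)$ and later in $(s\,v'\rightend)$.
\begin{itemize}
\item If no letter is consumed in between, then $v=v'$ and the loop can be cut. This is your empty-sweep observation.
\item Otherwise, delete from $w$ the letter occurrences consumed between the two configurations, obtaining a shorter word $w'$.
\end{itemize}
In the non-returning mode, every completed sweep skips each letter it does not consume as translucent for the current state. So each deleted occurrence was translucent every time the head passed it before reaching $(s\,v\rightend)$.

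Deleting letters from a word in $\tau(q)^*$ keeps it in $\tau(q)^*$. The letter actually read at each earlier step is not among the deleted ones. Hence the run on $w'$ replays the earlier sweeps step by step and reaches $(s\,v'\rightend)$. From there the original accepting continuation applies, so $w'\in L(M)$, contradicting minimality of $|w|$.

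Therefore the sweep-start states of such a run are pairwise distinct. This gives the bound $k=|Q|$ that your construction of $M_k$ needs, and it is precisely the key step of the paper's proof.
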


  \begin{proof}

  Assume the computation of $M$ on a word
  \[
  w = x_0 a_0 u_0 x_1 a_1 u_1 \cdots x_l a_l u_l
  \]
  is accepting, where $x_i, a_i, u_i \in \Sigma^*, 0\le i\le l$.
  Assume furthermore that there are two sweeps in the computation, started by
  $M$ in the same state, say $s \in Q$, such that at least one symbol is read
  in between:
  \[
  \begin{aligned}
  (q_0 x_0a_0u_0x_1a_1u_1\cdots x_la_lu_l \rightend)
  &\vdash^* (s a_0u_0a_1u_1\cdots a_lu_l \rightend) \\
  &\vdash^+ (s u_0u_1\cdots u_l \rightend)
  \vdash^+ \accept 
  \end{aligned}
  \]
where $a_0a_1\cdots a_l\neq \lambda$ and the factors $x_i$ are read in the sweeps until configuration 
$(s a_0u_0a_1u_1\cdots a_lu_l \rightend)$ is reached. In this phase the $a_i$
and $u_i$ are translucent in any sweep. Similarly, the factors $a_i$ are read
in the next sweeps until configuration 
$(s u_0u_1\cdots u_l \rightend)$ is reached, where the $u_i$
are translucent in any sweep.
  Then the computation starting in configuration
  \[
  (q_0 x_0 u_0 x_1 u_1 \cdots x_l u_l\rightend)
  \]
  is accepting as well. Indeed, before reaching configuration $(s
  a_0u_0a_1u_1\cdots a_lu_l \rightend)$, automaton $M$ is not aware of the
  presence of the symbols $a_0,a_1,\dots,a_l$. Hence, there must exist a computation
  \[
  \begin{aligned}
  (q_0 x_0u_0x_1u_1\cdots x_lu_l \rightend)
  &\vdash^* (s u_0u_1\cdots u_l \rightend)
  \vdash^+ \accept .
  \end{aligned}
  \]

We can safely assume that in any accepting computation there is no useful loop
on sweeps that do not consume any letter. Thus, if there is a loop in the
sequence of sweeps, we can always find a shorter word that is accepted as
well. This implies that if $L(M)$ is non-empty then there is a word in $L(M)$ that
is accepted with at most $|Q|$ sweeps. A simple modification that adds a
finite counter to the states yields an $\nrdfawtl$ $M'$ that accepts exactly
the words of $L(M)$ accepted by $M$ in at most $|Q|$ sweeps. In particular,~$L(M)$
is empty if and only if $L(M')$ is empty.

By Proposition~\ref{prop:it-trans}, any $\nrnfawtl$ restricted to at most 
$|Q|$ sweeps can effectively be converted into an equivalent $\ntrans$ 
that makes at most $|Q|$ sweeps either. We do so with $M'$ and obtain 
an $\ntrans$ $T$. Then, the language accepted by $T$ is
regular and can effectively be constructed from $T$~\cite{kutrib:2022:cadponiufst}.
As emptiness is decidable for regular
  languages, it follows that the emptiness problem is also decidable
  for $\nrdfawtl$ and $\nrnfawtl$.
  \end{proof}


\end{document}